\documentclass{omasvjour3}

\smartqed  % flush right qed marks, e.g. at end of proof
\usepackage{graphicx}
\usepackage[OT1]{fontenc}
\usepackage{mathptmx}      % use Times fonts if available on your TeX system

%
% insert here the call for the packages your document requires
\usepackage{amssymb}
\usepackage{amsmath}
\usepackage{latexsym}
\usepackage{url}
\usepackage{float}
%\usepackage{xspace}
%\usepackage{algorithmwh}
% etc.
%
% please place your own definitions here and don't use \def but
% \newcommand{}{}
\newcommand{\xy}{X \rightarrow Y}

\newcommand{\ass}[2]{\ensuremath{{#1} \rightarrow {#2}}}

\spnewtheorem{observation}{Observation}{\bf}{\it}
% Insert the name of "your journal" with
\journalname{arXiv report}

\begin{document}

\title{Assessing the statistical significance of association rules}
%Insert your title here%\thanks{Grants or other notes
%about the article that should go on the front page should be
%placed here. General acknowledgments should be placed at the end of the article%.}

%\subtitle{Do you have a subtitle?\\ If so, write it here}

%\titlerunning{Short form of title}        % if too long for running head

\author{Wilhelmiina H{\"a}m{\"a}l{\"a}inen}

%\authorrunning{Short form of author list} % if too long for running head

\institute{Department of Computer Science, University of Helsinki\\
              \email{whamalai@cs.helsinki.fi}           %  \\
%             \emph{Present address:} of F. Author  %  if needed
}

\date{}
% The correct dates will be entered by the editor

\maketitle

\begin{abstract}
An association rule is statistically significant, if it has a small
probability to occur by chance. It is well-known that the traditional
frequency-confidence framework does not produce statistically
significant rules. It can both accept spurious rules (type 1 error)
and reject significant rules (type 2 error). The same problem concerns
other commonly used interestingness measures and pruning
heuristics. 

In this paper, we inspect the most common measure functions --
frequency, confidence, degree of dependence, $\chi^2$, correlation
coefficient, and $J$-measure -- and redundancy reduction
techniques. For each technique, we analyze whether it can make type 1
or type 2 error and the conditions under which the error occurs. In
addition, we give new theoretical results which can be use to guide
the search for statistically significant association rules.
%Insert your abstract here. Include keywords, PACS and mathematical
%subject classification numbers as needed.
\keywords{Association rule \and Statistical significance \and 
Interestingness measure}
% \PACS{PACS code1 \and PACS code2 \and more}
% \subclass{MSC code1 \and MSC code2 \and more}
\end{abstract}

\section{Introduction}
\label{intro}

One of the most important tasks of data mining is the search for
partial dependencies in data. A partial dependency between attributes
$A_1,...,A_l$ means that some values $a_1,...,a_l$ occur together more
often than expected, if the attributes were independent. When two
attribute sets $X$ and $Y$ are partially dependent, the dependency can
be expressed as a rule $X=\overline{x} \rightarrow Y=\overline{y}$,
for the given value combinations $\overline{x}$ and $\overline{y}$. If
the rule is common and strong enough, it is called an {\em association
  rule} \cite{agrawalass}.

The commonness and strength of rule
$\ass{X=\overline{x}}{Y=\overline{y}}$ are measured by frequency\\
$fr(\ass{X=\overline{x}}{Y=\overline{y}})=P(X=\overline{x},Y=\overline{y})$
and confidence
$cf(\ass{X=\overline{x}}{Y=\overline{y}})=P(Y=\overline{y}|X=\overline{x})$. It
is required that $fr(\ass{X=\overline{x}}{Y=\overline{y}})\geq
min_{fr}$ and $cf(\ass{X=\overline{x}}{Y=\overline{y}})\geq min_{cf}$
for some user-defined thresholds $min_{fr}$ and $min_{cf}$.

The problem of traditional association rules is that they do not
reflect the statistical significance of partial
dependencies. Statistically, the significance of an association rule
is defined by the probability that it has occurred by chance. In
practice, the statistical significance depends on two things:
frequency, $P(X=\overline{x},Y=\overline{y})$, and the degree of
dependence,
$\frac{P(X=\overline{x},Y=\overline{y})}{P(X=\overline{x})P(Y=\overline{y})}$.
The stronger the dependence, the smaller the frequency can be, and
vice versa. This means that no absolute values can be given for the
minimum frequency and minimum confidence.

This problem is well-known, and especially Webb \cite{webb06,webbml}
has criticized the frequency-confidence framework. He has shown that
in the worst case, all discovered rules are spurious (statistically
insignificant). Berzal et al. \cite{berzal} and Aggarwal and Yu
\cite{aggarwalyu2} have shown that the frequency-confidence framework
is problematic, even if the absolute threshold values are not used.

Still the mainstream has ignored the problem, because the
efficiency of the search algorithms lies on the frequency-based
pruning. Nearly all search algorithms utilize the antimonotonicity of the
frequency: if set $X$ is not frequent (given some $min_{fr}$), then
none of its supersets $Y \supset X$ can be frequent. 

If the minimum frequency is set too high, several significant rules
can be missed. On the other hand, if the minimum frequency is too low,
a large number of spurious rules is accepted and the problem becomes
computationally intractable. In statistics, these two error types --
accepting spurious patterns and rejecting true patterns -- are known as
{\em type 1} and {\em type 2 errors}. 

As a solution, statistical or other objective measures have been used
to rank the discovered rules or to guide the search
(e.g.\ \cite{liuhsuma,morishitanakaya,morishitasese}). These measures
have generally two problems: either they are designed to assess
dependencies between attributes (not attribute values) and can miss
significant association rules, or they are statistically unsound.

In this paper, we examine how well the common interestingness measures and
search heuristics capture significant association rules. For each
method, we analyze whether it can make type 1 or type 2 error and
the conditions under the errors they occur. We give several important
results which can be used to select the most suitable search
heuristics for the given mining task. On the other hand, the new
theoretical results can give an insight for developing new,
statistically sound search methods for partial dependencies.

The paper is structured as follows: In Section 2 the problem of
finding statistically significant association rules is formalized.
The basic definitions of full and partial dependencies, association
rules, and statistical significance are given.

The effect of commonly used interest measures and search heuristics to the
statistical significance is analyzed in Sections 3--5.  In Section 3,  
we analyze how well the basic measures of association rules, frequency,
confidence, and the degree of dependency, indicate the statistical
significance. In Section 4, we proceed into more complex measures: $\chi^2$,
correlation coefficient $\phi$, and $J$-measure. The effect of common 
redundancy reduction techniques is analyzed in Section 5. 

The final conclusions are drawn in Section \ref{concl}.

\section{Definitions}
\label{definitions}

We begin by formalizing the problem and give exact definitions for
full and partial dependencies, association rules, and statistical
significance. The basic notations are defined in Table
\ref{reldb}. When it is clear from the context, we use abbreviations
$A$ and $\neg A$ for single attribute values $(A=1)$ and $(A=0)$, and
$X$ or $A_1,...,A_l$ for assignment $A_1=1,...,A_l=1$.

\begin{table}[!h]
\begin{center}
\caption{Basic notations.}
\footnotesize{
\begin{tabular}{|l|l|}
\hline
{\bf Notation}& {\bf Meaning}\\
\hline
$A,B,C$, $A_1,A_2,A_3,...$ & binary attributes (variables)\\
\hline
$a,b,c$, $a_i,a_2,a_3,... \in\{0,1\}$ & attribute values\\
\hline
$R=\{A_1,...,A_k\}$ & set of all attributes (relational schema)\\ 
\hline
$|R|=k$ & number of attributes in $R$\\
\hline
$Dom(R)=\{0,1\}^k$&attribute space (domain of $R$)\\
\hline 
$X,Y,Z \subseteq R$&attribute sets\\
\hline 
$|X|=|A_1,...,A_l|=l$&number of attributes in set $X$\\
\hline 
$Dom(X)=\{0,1\}^l \subseteq Dom(R)$&domain of $X$, $|X|=l$\\
\hline 
$(X=\overline{x})=\{(A_1=a_1),...,(A_l=a_l)\}$&event; attribute value 
assignment for $X$,\\ 
&$|X|=l$\\ 
\hline 
$t=\{A_1=t(A_1),...,A_k=t(A_k)\}$&row (tuple) according to $R$\\
\hline 
$r=\{t_1,...,t_n~|~t_i \in Dom(R)\}$&relation according to $R$\\ 
\hline
$|r|=n$& size of relation $r$ (the number of rows)\\
\hline
$\sigma_{X=\overline{x}}(r)=\{t \in r~|~t[X]=\overline{x}\}$&set of rows for which $X=\overline{x}$ holds\\
\hline
$m(X=\overline{x})=|\sigma_{X=\overline{x}}(r)|$&number of rows, for which $X=\overline{x}$ holds;\\ 
&$(X=\overline{x})$'s absolute frequency or support\\ 
\hline
$P(X=\overline{x})$&$(X=\overline{x})$'s relative frequency (probability)\\ 
&in $r$\\
\hline
$P(Y=\overline{y}|X=\overline{x})=\frac{P(X=\overline{x},Y=\overline{y})}{P(X=\overline{x})}$ & conditional probability of $Y$ given
$X$\\ 
&in $r$\\ 
\hline
\end{tabular}
}
\label{reldb}
\end{center}
\end{table}

\subsection{Statistical dependence}
\label{dependencies}

Statistical dependence is classically defined through statistical
independence (see e.g.\ \cite{silverstein,meo}). In the following, we will
concentrate on {\em two-way dependencies}, i.e.\ dependencies between
two attribute sets or events.

\begin{definition}[Statistical independence and dependence]
Let $X\subsetneq R$ and $Y\subseteq R\setminus X$ be sets of binary 
attributes. 

Events $X=\overline{x}$ and $Y=\overline{y}$, 
$\overline{x}\in Dom(X)$, $\overline{y}\in Dom(Y)$, are {\em mutually
independent}, if $P(X=\overline{x}, Y=\overline{y})=
P(X=\overline{x})P(Y=\overline{y})$.

Attribute sets $X$ and $Y$ are mutually independent, if 
$P(X=\overline{x}, Y=\overline{y})=
P(X=\overline{x})P(Y=\overline{y})$ for all 
value combinations $\overline{x}\in Dom(X)$ and $\overline{y}\in Dom(Y)$. 

If the events or attribute sets are not independent, they are 
{\em dependent}. 
\end{definition}

The following example demonstrates that attribute sets can be
dependent, even if some events are independent:

\begin{example}
Let $R=\{A,B,C\}$ be a set of binary attributes, where attribute $C$
depends on attribute set $\{A,B\}$.  Still it is possible that events
$(C=1)$ and $(A=1,B=1)$ are mutually independent. Table \ref{probesim}
gives an example of such a probability assignment.

\begin{table}[!h]
\begin{center}
\caption{A probability assignment, where attribute $C$ depends on set 
$\{A,B\}$, but event $(A,B,C)$ is independent. 
$0<d\leq \min\{P(A,\neg B)P(\neg C), 
(1-P(A,\neg B))P(C)\}$.}
\label{probesim}
\begin{tabular}{|l|l|}
\hline
$X$&$P(X)$\\
\hline
$ABC$&$P(A,B)P(C)$\\
\hline
$AB\neg C$&$P(A,B)P(\neg C)$\\
\hline
$A\neg BC$&$P(A,\neg B)P(C)+d$\\
\hline
$A\neg B\neg C$&$P(A,\neg B)P(\neg C)-d$\\
\hline
$\neg ABC$&$P(\neg A,B)P(C)-d$\\
\hline
$\neg AB\neg C$&$P(\neg A,B)P(\neg C)+d$\\
\hline
$\neg A\neg BC$&$P(\neg A,\neg B)P(C)$\\
\hline
$\neg A\neg B\neg C$&$P(\neg A,\neg B)P(\neg C)$\\
\hline
\end{tabular}
\end{center}
\end{table}

However, it is also possible that all events are dependent. An example
of such a probability assignment is given in Table \ref{probesim2}.

\begin{table}[!h]
\begin{center}
\caption{A probability assignment, where attribute $C$ depends on set 
$\{A,B\}$ and all events are dependent. 
$0<d\leq \min\{P(A,\neg B)P(\neg C),(1-P(A,\neg B))P(C)\}$.}
\label{probesim2}
\begin{tabular}{|l|l|}
\hline
$X$&$P(X)$\\
\hline
$ABC$&$P(A,B)P(C)-d$\\
\hline
$AB\neg C$&$P(A,B)P(\neg C)+d$\\
\hline
$A\neg BC$&$P(A,\neg B)P(C)+d$\\
\hline
$A\neg B\neg C$&$P(A,\neg B)P(\neg C)-d$\\
\hline
$\neg ABC$&$P(\neg A,B)P(C)-d$\\
\hline
$\neg AB\neg C$&$P(\neg A,B)P(\neg C)+d$\\
\hline
$\neg A\neg BC$&$P(\neg A,\neg B)P(C)+d$\\
\hline
$\neg A\neg B\neg C$&$P(\neg A,\neg B)P(\neg C)-d$\\
\hline
\end{tabular}
\end{center}
\end{table}

When we analyze the distributions further, we observe that in Table
\ref{probesim}, $C$ is actually dependent on $A$ and $B$ separately: 
$P(A,C)=P(A)P(C)+d$, $P(B,C)=P(B)P(C)-d$. In Table \ref{probesim2}, 
  $\{A,B\}$ is the minimal set which has a dependency with $C$. 
\end{example}

It is usually required that the dependency should be significant,
before events or attribute sets are called dependent. In the latter
case, this means that all value combinations $(X=\overline{x},
Y=\overline{y})$ should be represented in the data and the dependences
should be sufficiently strong for most events (e.g.\ 
\cite{brinmotwani,jaroszewicz}).

The strength of a statistical dependency between $(X=\overline{x})$
and $(Y=\overline{y})$ is defined by comparing $P(X=\overline{x},
Y=\overline{y})$ and $P(X=\overline{x})P(Y=\overline{y})$. 
The measure functions can be based on either the {\em
  absolute difference} ({\em dependence value} \cite{meo}),
$d(X=\overline{x},Y=\overline{y})=P(X=\overline{x},Y=\overline{y})-
P(X=\overline{x})P(Y=\overline{y})$, or the {\em relative difference}, 
$$r(X=\overline{x},Y=\overline{Y})=\frac{d(X=\overline{x},Y=\overline{y})}
{P(X=\overline{x})P(Y=\overline{y})}.$$

In the association rule literature, the relative difference is often
defined via another measure, called the {\em degree of dependence}
({\em dependence} \cite{wuzhangzhang}, {\em degree of independence}
\cite{yaozhong}, or {\em interest} \cite{brinmotwani}):

\begin{equation}
\gamma(X=\overline{x},Y=\overline{y})=
\frac{P(X=\overline{x},Y=\overline{y})}{P(X=\overline{x})P(Y=\overline{y})}=
1+\frac{d(X=\overline{x},Y=\overline{Y}}{P(X=\overline{x})P(Y=\overline{y})}.
\end{equation}

In the real world data, it is quite common that some value
combinations are overrepresented, while others are totally missing. In
this situation, we cannot make any judgements concerning dependences
between attribute sets, but still we can find significant dependencies
between certain events. In this paper, these two kinds of significant
dependencies are called partial and full dependencies: 

\begin{definition}[Partial and full dependence]
Let $X$ and $Y$ be like before. 
Attribute sets $X$ and $Y$ are called {\em partially
  dependent}, if the dependency between events 
$(X=\overline{x})$ and $(Y=\overline{y})$ is significant for some
$\overline{x} \in Dom(X)$ and $\overline{y}\in Dom(Y)$. 

$X$ and $Y$ are called {\em fully dependent}, if the dependency
between $X$ and $Y$ is significant. 
\end{definition}

Thus, full dependence implies partial dependence, but not vice
versa. This means that the methods for assessing the significance of 
full dependencies do not necessarily capture all significant partial
dependencies. 

One trick is to turn a partial dependency into a full dependency by
treating events $X=\overline{x}$ and $Y=\overline{y}$ as binary
attributes. Table \ref {XYdesim} gives a contingency table of the
associated probabilities. Now it is more likely that all four value
combinations are represented in the data and the methods for assessing
full dependencies can be applied.

\begin{table}[!h]
\begin{center}
\caption{A contingency table with probabilities of $P(X,Y)$, $P(X,\neg Y)$, 
$P(\neg X,Y)$ and $P(\neg X, \neg Y)$. If $d>0$, 
$d\leq \min\{P(\neg X)P(Y),P(X)P(\neg Y)\}$, and if $d<0$, $d\leq
  \min\{P(X)P(Y),P(\neg X)P(\neg Y)\}$.} 
\label{XYdesim}
\footnotesize{
\begin{tabular}{|l|l|l|l|}
\hline
& $Y$ & $\neg Y$ & $\Sigma$\\
\hline
$X$ & $P(X,Y)=$ & $P(X,\neg Y)=$ & $P(X)$\\
&$P(X)P(Y)+d$&$P(X)P(Y)-d$&\\
\hline
$\neg X$& $P(\neg X, Y)=$& $P(\neg X,\neg Y)=$&$P(\neg X)$\\
&$P(\neg X)P(Y)-d$&$P(\neg X)P(\neg Y)+d$&\\
\hline
$\Sigma$ & $P(Y)$ & $P(\neg Y)$ & $1$\\
\hline
\end{tabular}
}
\end{center}
\end{table}

\subsection{Association rules}
\label{association rules}

Often, the dependency between events is expressed as rule
$X=\overline{x} \rightarrow Y=\overline{y}$. Association rules
\cite{agrawalass} are a natural framework to express such
rules. Traditionally, association rules are defined in the {\em
  frequency-confidence framework}:

\begin{definition}[Association rule]
Let $R$ be a set of binary attributes and $r$ a relation according
to $R$.  Let $X \subsetneq R$ and $Y \subseteq R\setminus X$, be
attribute sets and $\overline{x} \in Dom(X)$ and $\overline{y} \in
Dom(Y)$ their value combinations.

The {\em confidence} of rule $(X=\overline{x}) \rightarrow (Y=\overline{y})$ is 
$$cf(X=\overline{x} \rightarrow
Y=\overline{y})=\frac{P(X=\overline{x},Y=\overline{y})}{P(X=\overline{x})}=P(Y=\overline{y}|X=\overline{x})$$
and the {\em frequency} of the rule is
$$fr(X=\overline{x} \rightarrow Y=\overline{y})=P(X=\overline{x},Y=\overline{y}).$$

Given user-defined thresholds $min_{cf}, min_{fr} \in [0,1]$, 
rule $(X=\overline{x}) \rightarrow (Y=\overline{y})$ is an {\em association
rule} in $r$, if 
\begin{itemize}
\item[(i)] $cf(X=\overline{x} \rightarrow Y=\overline{y})\geq min_{cf}$, and 

\item[(ii)] $fr(X=\overline{x} \rightarrow Y=\overline{y})\geq min_{fr}$.
\end{itemize}
\end{definition}

The first condition requires that an association rule should be strong
enough and the second condition requires that it should be common
enough. In this paper, we call rules association rules, even if no
thresholds $min_{fr}$ and $min_{cf}$ are specified.

Often it is assumed that the consequent $Y=\overline{y}$ contains just
one attribute, $|Y|=1$. When the consequent is a fixed class attribute
$C$, rules $X=\overline{x} \rightarrow C=c$, $c\in Dom(C)$, are called
{\em classification rules}.

Another common restriction is to allow only positive attribute
values ($A_i=1$). The reasons are mostly practical: in the traditional context
of market-basket data, most of the items do not occur in a single
basket. Thus, it is sensible to search only correlations between items
that often occur together. On the other hand, the number of items is
very large, typically $>1000$, and searching all association rules would be
impossible. In the other contexts, negative attribute values cannot be
excluded. For example, when we search dependencies in the demographic
data, we canot exclude all women, unmarried, employed,
etc.     

The main problem of the frequency-confidence framework is that the
minimum frequency and confidence requirements do not guarantee any
statistical dependence or significance
\cite{brinmotwani,aggarwalyu2,morishitasese}. However, most
researchers have adopted Piatetsky-Shapiro's \cite{piatetskyshapiro}
argument that a rule cannot be interesting, if its antecedent and
consequent are statistically independent. That is why it is often
demanded that $\gamma(X=\overline{x} \rightarrow Y=\overline{y})\neq
1$ (e.g.\ \cite{brinmotwani,wuzhangzhang,tankumar2}).  According to
the sign of $\gamma-1$, the rule or its type is called positive,
negative or independent (''null association rule'')
\cite{liuhsuma,garriga}. Usually, only positive dependencies are
searched, since they can be used for prediction.

We note that from the statistical point of view, the direction of a
rule ($\ass{X=\overline{x}}{Y=\overline{y}}$ or
$\ass{Y=\overline{y}}{X=\overline{x}}$) is a matter of choice. In the worst 
case, the direction can be misleading, since rules are usually
associated with causation and association rules (or correlations) do
not necessarily represent any causality relationship \cite{jermaine}.

Another important notice is that the association rules are not
implications. Especially, rule $\ass{Y}{X}$ is not the same as
$\ass{\neg X}{\neg Y}$. Unless $P(X)=P(Y)=0.5$, rules $Y \rightarrow
X$ and $\neg X \rightarrow \neg Y$ have different
frequencies, confidences and degrees of dependence.

\subsection{Statistical significance of partial dependencies}
\label{significance}

The idea of statistical significance tests is to estimate the
probability of the observed or a rarer phenomenon, under some null
hypothesis. When the objective is to test the significance of
the dependency between $X=\overline{x}$ and $Y=\overline{y}$, the null
hypothesis is the independence assumption: 
$P(X=\overline{x},Y=\overline{y})=P(X=\overline{x})P(Y=\overline{y})$. If
the estimated probability $p$ is very small, we can reject the
independence assumption, and assume that the observed dependency is
not due to chance, but significant at level $p$. The smaller $p$ is,
the more significant the observation is.

Usually the minimum requirement for any significance is
$p\leq 0.05$. It means that there is 5\% chance that a spurious rule
passes the significance test (``type 1 error''). If we test 10 000 rules,
it is likely that will find 500 spurious rules. This so called {\em
  multiple testing problem} is inherent in the knowledge discovery,
where we often perform an exhaustive search over all possible patterns. 

As a solution, the more patterns we test, the stricter bounds for the
significance we should use. The most well-known method is {\em
  Bonferroni adjustment} \cite{bonferroni}, where the desired
significance level $p$ is divided by the number of tests. In the
association rule discovery, we can give an upper bound for the number
of rules to be tested. However, this rule is so strict that there is
a risk that we do not recognize all significant patterns (``type 2
error''). Webb \cite{webbml,webb06} has argued that this is a less
serious problem than finding spurious rules, because the number of
rules is anyway large. He has also suggested another approach, where a
part of the data is held as an evaluation set. Now the number of rules
to be tested is known before testing, and higher significance levels
can be used.

Let us now analyze the significance of partial dependency
$X=\overline{x} \rightarrow Y=\overline{y}$. To simplify the
notations, the sets are denoted by $X$ and $Y$.

The significance of the observed frequency $m(X,Y)$ can be estimated
exactly by the binomial distribution. Each row in relation $r$,
$|r|=n$, corresponds to an independent Bernoulli trial, whose outcome
is either 1 ($XY$ occurs) or 0 ($XY$ does not occur). All rows are
mutually independent.

Assuming the independence of attributes $X$ and $Y$, combination $XY$
occurs on a row with probability $P(X)P(Y)$. Now the number of rows
containing $X,Y$ is a binomial random variable $M$ with parameters
$P(X)P(Y)$ and $n$. The mean of $M$ is $\mu_M=nP(X)P(Y)$
and its variance is $\sigma_M^2=nP(X)P(Y)(1-P(X)P(Y))$. The probability
that $M \geq m(X,Y)$ is

\begin{equation}
\label{bintn}
p=P(M\geq m(X,Y))= \sum_{i=m(X,Y)}^n \left({n \atop
  i}\right) (P(X)P(Y))^i (1-P(X)P(Y))^{n-i}.
\end{equation}

This can be approximated by the standard normal distribution 
$$p \approx 1-\Phi(t),$$
where $\Phi(t(X,Y))=\frac{1}{\sqrt{2\pi}}\int_{- \infty}^{t(X,Y)}
e^{-u^2/2}du$ is the standard normal cumulative distribution function
and $t(X,Y)$ is standardized $m(X,Y)$:

\begin{equation}
\label{tXY}
t(X,Y)=\frac{m(X,Y)-\mu_M}{\sigma_M}=\frac{m(X,Y)-nP(X)P(Y)}
{\sqrt{nP(X)P(Y)(1-P(X)P(Y))}}.
\end{equation}

The approximation is quite good for large $n$, but it should not be
used, when the expected counts $nP(X)P(Y)$ and $n(1-P(X)P(Y))$ are small. As a
rule of thumb, it is often required that $nP(X)P(Y)>5$ and
$n(1-P(X)P(Y))>5$ (e.g.\ \cite[p. 121]{milton}). 

The cumulative distribution function $\Phi(t)$ is quite difficult to
calculate, but for the association rule mining it is enough to know
$t(X,Y)$. Since $\Phi(t)$ is monotonically increasing, probability $p$
is monotonically decreasing in the terms of $t(X,Y)$. Thus, we can use
$t$ as a measure function for ranking association rules according to
their significance. On the other hand, we know that in the normal
distribution $P(-2\sigma_M<M-\mu_M<2\sigma_M)\approx 0.95$ or,
equivalently,

$$P\left(-2<\frac{M-\mu_M}{\sigma_M}<2\right)\approx 0.95.$$

I.e.\ $P(t(X,Y)\geq 2) \approx 0.025$, which is a minimum requirement
for any significance. Thus, we can prune all rules
$\ass{X}{Y}$ for which $t(X,Y)<2$. Generally, we can set the threshold
$K$ according to Chebyshev's inequality (the proof is given 
e.g.\ in \cite[pp. 780-781]{milton}):

$$P\left(-K<\frac{M-\mu_M}{\sigma_M}<K\right)\geq 1-\frac{1}{K^2}.$$

I.e.\ $P(t \geq K)<\frac{1}{2K^2}.$ Now the Bonferroni adjustment is
achieved by using $\sqrt{m}K$ instead of $K$, where $m$ is the number
of tests.

Equations (\ref{bintn}) and (\ref{tXY}) can be directly
generalized to attribute-value sets $\{A_1=a_1,\ldots,A_l=a_l\}$. Now
the null hypothesis is that all attributes are mutually independent:
$$P(A_1=a_1,\ldots,A_l=a_l)=P(A_1=a_1)P(A_2=a_2)\ldots P(A_l=a_l)=\Pi_{i=1}^l
P(A_i=a_i).$$
The significance of the dependence in set $\{A_1=a_1,\ldots,A_l=a_l\}$ is
measured by 
$$t(A_1=a_1,\ldots,A_l=a_l)=
\frac{m(A_1=a_1,\ldots,A_l=a_l)-n\Pi_{i=1}^l P(A_i=a_i)}
{\sqrt{n\Pi_{i=1}^l P(A_i=a_i)(1-\Pi_{i=1}^l P(A_i=a_i))}}.$$

\section{Basic measures for association rules}
\label{basicmeasures}

The statistical significance of rule $\ass{X}{Y}$ is a function of
$P(X)$, $P(Y)$ and $P(X,Y)$ (Equation (\ref{tXY})). All basic measures,
like frequency, confidence, and the degree of dependency, are composed
from these elements. In the frequency-confidence framework, the
assumption is that a high frequency ($P(X,Y)$) and a high confidence
($P(Y|X)$) indicate an interesting rule. In the following, we will
analyze conditions under which this assumption fails. As an
alternative, we analyze ``frequency-dependence framework'', and show that a
high frequency and a high degree of dependence, $\gamma$, indicate
statistical significance.

\subsection{Frequency and confidence}
\label{frcfsubsec}

Figure \ref{tfrcf} illustrates the significance of rule $X \rightarrow
Y$ as a function of frequency $P(X,Y)$ and confidence $P(Y|X)$, when
$Y$ is fixed. The values of $P(Y)$ are $0.2$, $0.4$, $0.6$ and
$0.8$. Now the significance measure $t$ is expressed as

$$\hat{t}=\frac{\sqrt{P(X,Y)}(P(Y|X)-P(Y))}{\sqrt{P(Y)(P(Y|X)-P(X,Y)P(Y))}}.$$

Data size $n$ is omitted, and the real significance is $t=\sqrt{n}\hat{t}$.  
The function is not defined when $P(Y|X)\leq P(X,Y)P(Y)$. For clarity, only 
areas where $t>0$ are drawn. In addition, it holds that $P(X,Y)\leq P(Y)$. 

\begin{figure*}[!h]
\begin{center}
\includegraphics[width=0.47\textwidth]{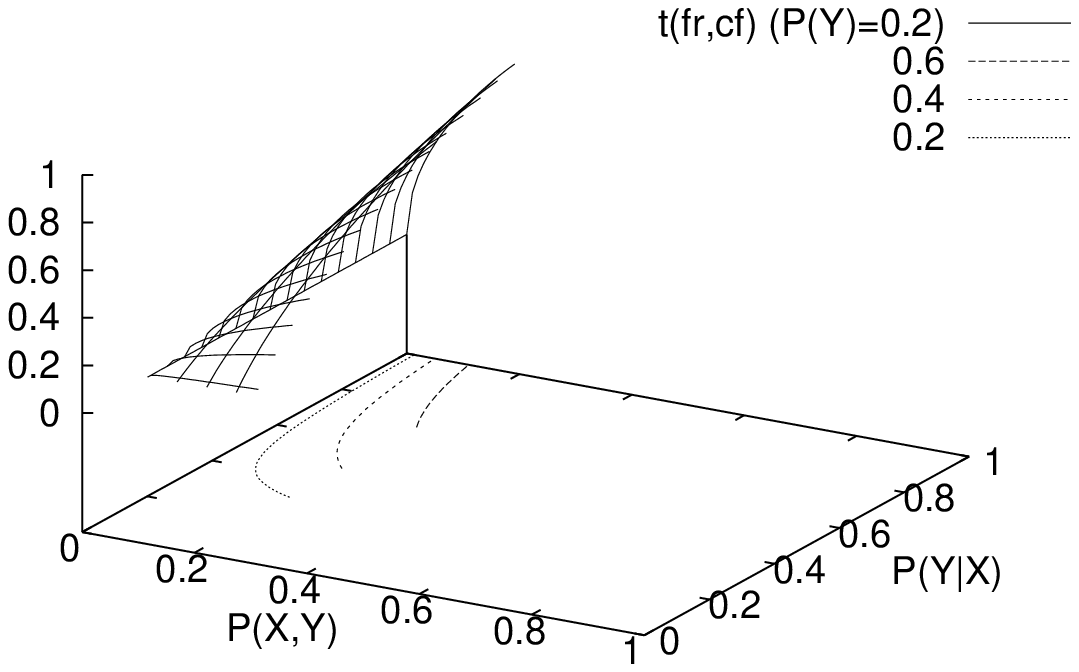}
\hfill
\includegraphics[width=0.47\textwidth]{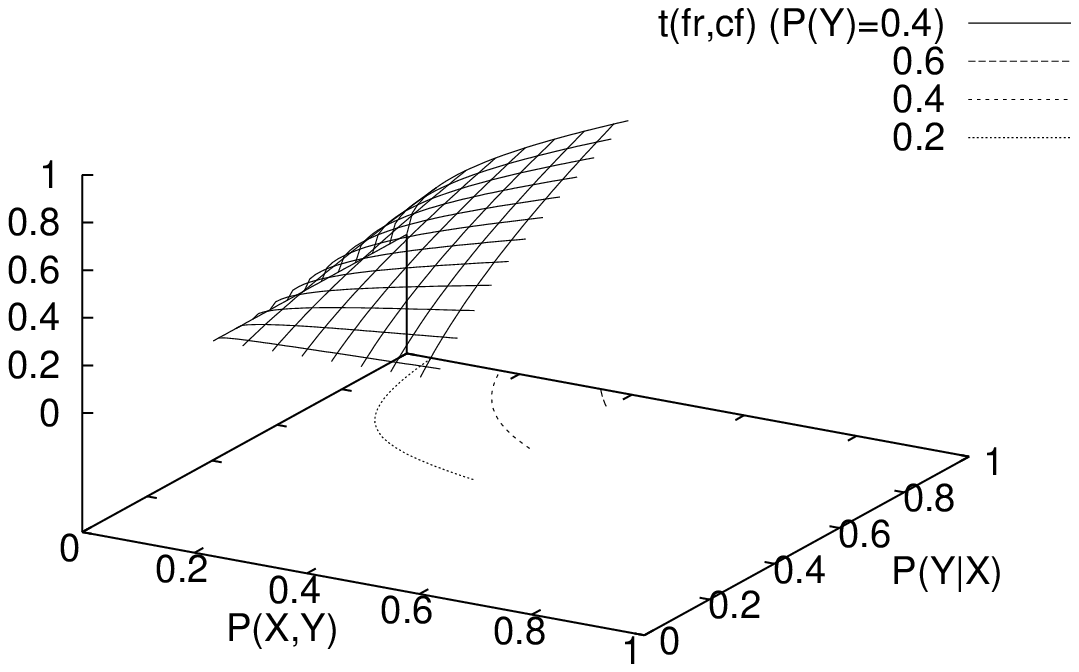}
\includegraphics[width=0.47\textwidth]{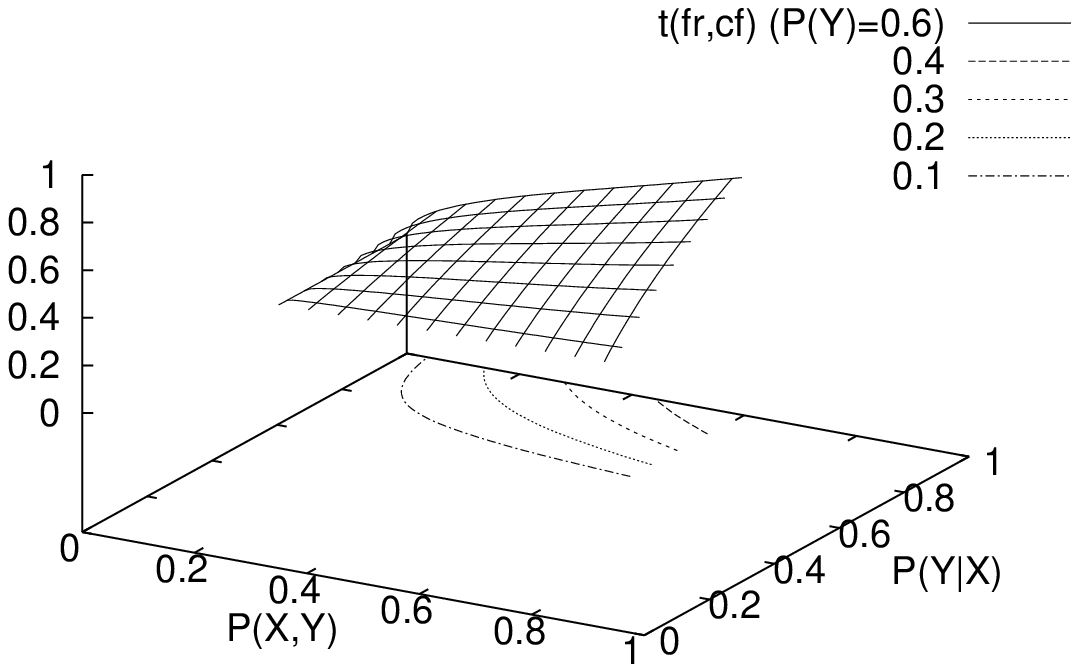}
\hfill
\includegraphics[width=0.47\textwidth]{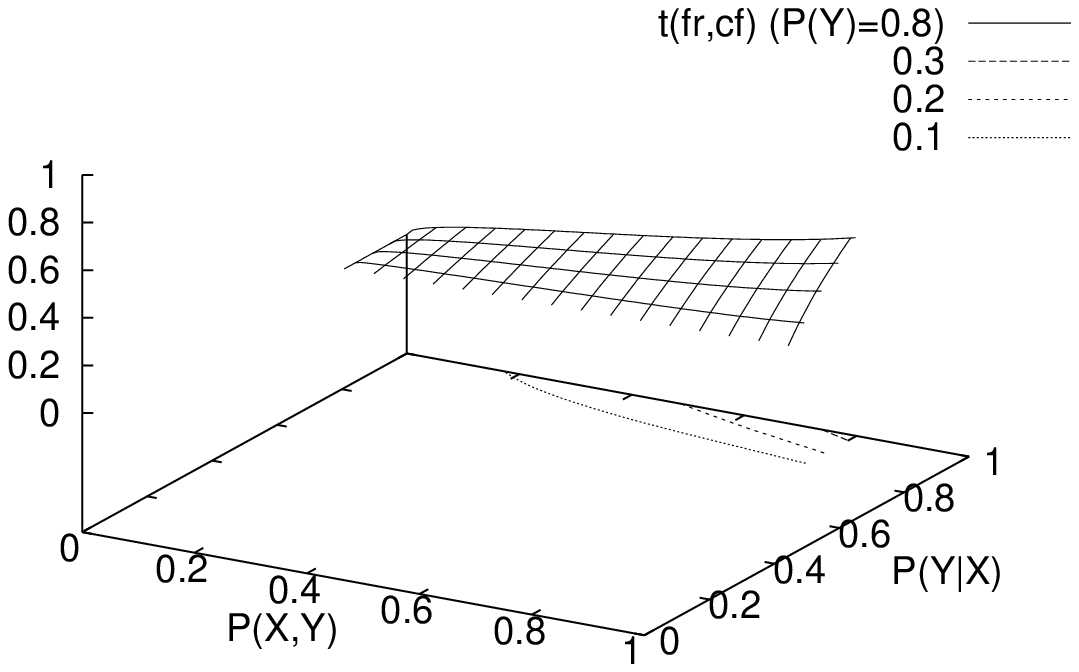}
\caption{The significance of $\ass{X}{Y}$ as a function of frequency 
$P(X,Y)$ and confidence $P(Y|X)$, when $P(Y)=0.2$ (left top), $P(Y)=0.4$ 
(right top), $P(Y)=0.6$ (left bottom) and $P(Y)=0.8$ (right bottom).}
\label{tfrcf}
\end{center}
\end{figure*}

The contours are compatible with a common intuition that the significance
is maximal, when both frequency and confidence are maximal. However,
the significance of the rule depends on $P(Y)$. The higher $P(Y)$ is,
the higher confidence the rule should have. The minimum requirement
for the confidence is $P(Y|X)>P(Y)$, since otherwise $t\leq 0$. In the
extreme case, when $P(Y)=1$, the rule is totally insignificant
($P(Y|X)=1$ for all $X$ and $t=0$). That is why rules with
different consequents are not comparable, in the terms of their
frequency and confidence. Often a rule with higher frequency and
confidence may be less significant than a weaker and less frequent
rule. A significant rule can be easily missed, when absolute
$min_{fr}$ and $min_{cf}$ values are used.

Generally, the preference for high frequency and confidence can cause
both type 1 and type 2 errors.  Let us first analyze what kind of
rules are accepted in the frequency-confidence framework. Let $X$ and
$Y$ be like in Table \ref{XYdesim}. The frequency of rule
$X=\overline{x} \rightarrow Y=\overline{y}$ is
$P(X=\overline{x})P(Y=\overline{y})+d$. Now any combination
$X=\overline{x}, Y=\overline{y}$ can be frequent, if
$P(X=\overline{x},Y=\overline{y})\geq min_{fr}$. If
$P(X=\overline{x})P(Y=\overline{y})\geq min_{fr}$, $X=\overline{x}$
and $Y=\overline{y}$ can be statistically independent ($d=0$) or even
negatively correlated ($d<0$).

The confidence of rule $X=\overline{x} \rightarrow Y=\overline{y}$ is
$P(Y=\overline{y})+\frac{d}{P(X=\overline{x})}$. The highest
confidence is achieved, when $P(Y=\overline{y})$ is large and
$P(X=\overline{x})$ is small. If $P(Y=\overline{y})\geq min_{cf}$, the
rule is confident, even if $X=\overline{x}$ and $Y=\overline{y}$ are
statistically independent.

On the other hand, the frequency-confidence framework can reject
significant rules. Let us analyze what the minimum frequency and
confidence should be for a rule to be significant. 

Let $t(\ass{X=\overline{x}}{Y=\overline{y}})\geq K$. This holds, when
the frequency is 
\begin{multline*}
P(X=\overline{x},Y=\overline{y})\geq \\
P(X=\overline{x})P(Y=\overline{y})+
\frac{K\sqrt{P(X=\overline{x})P(Y=\overline{y})(1-P(X=\overline{x})P(Y=\overline{y}))}}{n}
\end{multline*}
and the confidence is
\begin{multline*} 
P(Y=\overline{y}|X=\overline{x})\geq 
P(Y=\overline{y})+\frac{K\sqrt{P(Y=\overline{y})(1-P(X=\overline{x})P(Y=\overline{y}))}}{nP(X=\overline{x})}.
\end{multline*}

We see that the larger $n$ is, the smaller frequency and confidence
suffice for significance. On the other hand, the larger significance
level we require (expressed by $K$), the larger frequency and
confidence should be. The problem is that both of them depend on
$P(X=\overline{x})$ and $P(Y=\overline{y})$. The minimum frequency is an increasing function of
$P(X=\overline{x})P(Y=\overline{y})$. The minimum confidence is obtained from the minimum
frequency by dividing it by $P(X=\overline{x})$. Thus, the larger $P(X=\overline{x})$ is, the
larger $min_{cf}$ should be.

\begin{example}
Let $P(X)=P(Y)=0.5$ and $n=10 000$. Now rule $\ass{X}{Y}$ is
significant, if $P(X,Y)=0.25+\frac{K\sqrt{3}}{400}$ and 
$P(Y|X)=0.5+\frac{K\sqrt{3}}{200}$. Especially the confidence is low
and the rule is easily rejected with the typical $min_{cf}$ settings. 
For example, if we require that $K=10$ (indicating quite high 
significance), then confidence $0.5+\frac{10\sqrt{3}}{200}<0.60$
suffices.  
\end{example}

\subsection{Frequency and degree of dependence}
\label{frdegsubsec}

The problems of the frequency-confidence framework could be easily
corrected by using the degree of dependency,
$\gamma(X=\overline{x},Y=\overline{y})$, instead of confidence. This
approach is adopted e.g.\ in
\cite{srikantagrawal,tankumar2,aggarwalyu2}. Since
$\gamma(X=\overline{x},Y=\overline{y})=\frac{cf(X=\overline{x}
  \rightarrow Y=\overline{y}}{P(Y=\overline{y})}$, the frequency and
degree of dependence alone determine the statistical significance $t$.

Figure \ref{tfrg} illustrates the significance of rule
$\ass{X}{Y}$ as a function of frequency $P(X,Y)$ and degree of 
dependence $\gamma=\gamma(X \rightarrow Y)$:

$$\hat{t}(X \rightarrow
Y=\frac{\sqrt{P(X,Y)}(\gamma-1)}{\sqrt{\gamma-P(X,Y)}}.$$

Once again, the data size $n$ is omitted and $t=\sqrt{n}\hat{t}$.

\begin{figure*}[!h]
\begin{center}
\includegraphics[width=0.47\textwidth]{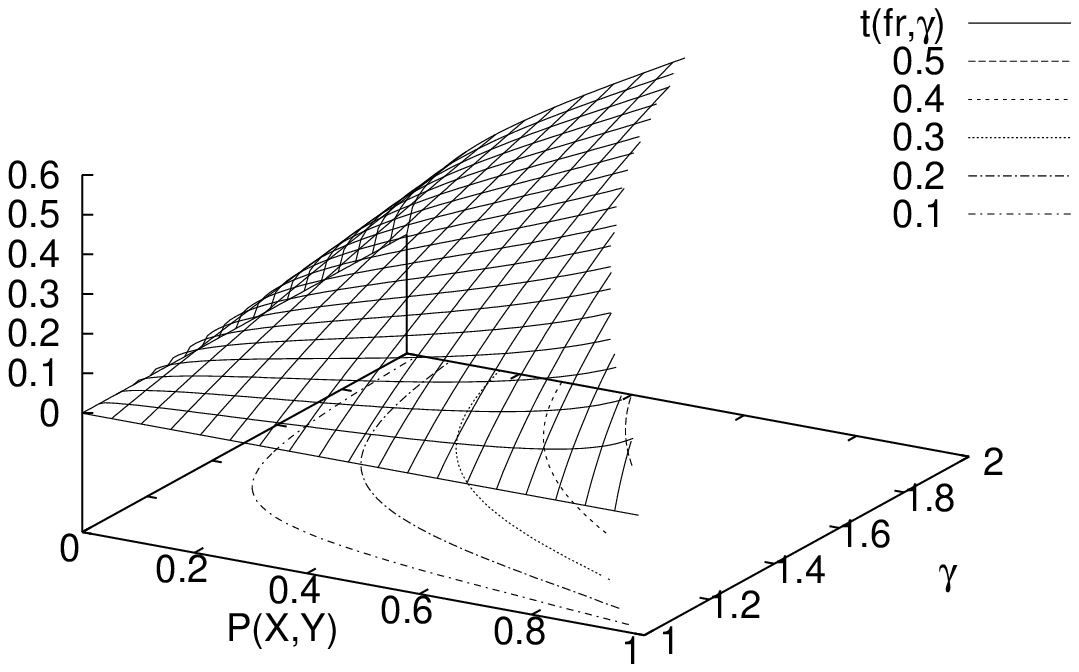}
\includegraphics[width=0.47\textwidth]{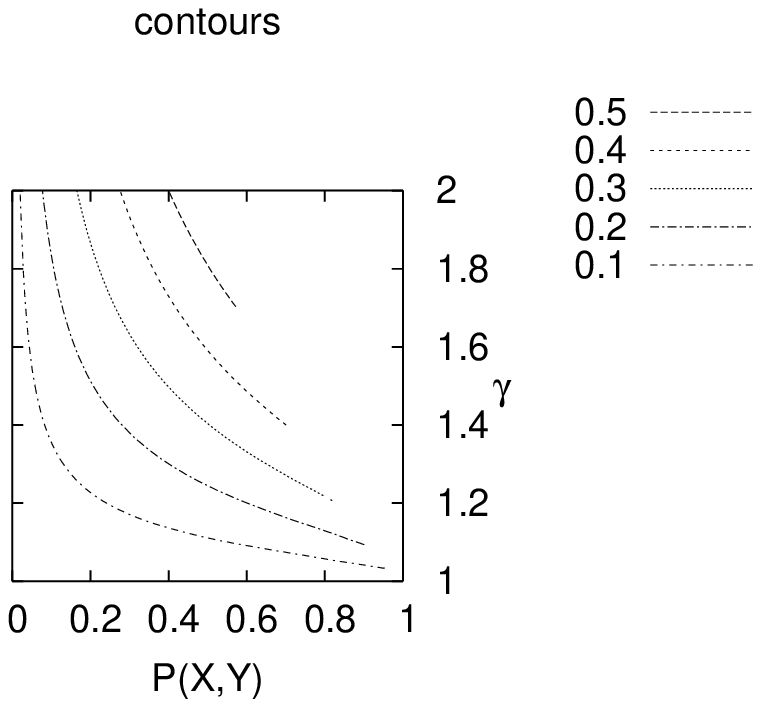}
\caption{The significance of $\ass{X}{Y}$ as a function of frequency 
$P(X,Y)$ and degree of dependency $\gamma$ (left) and the corresponding 
contours (right).}
\label{tfrg}
\end{center}
\end{figure*}

There is only one graph, because $P(Y)$ has no effect on
the significance, when $\gamma$ is given. $P(Y)$ determines only the
maximal possible value for $\gamma$: $\gamma(X \rightarrow Y)\leq
\frac{1}{P(Y)}$. (In the Figure, $\gamma \leq 2$ and $P(Y)\geq 0.5$.)
The minimum value, $\gamma\geq 1$, is set, because we are interested
in only positive correlations. 

The function is defined when $P(X,Y)\gamma \leq 1$, because 
$\gamma \leq \frac{1}{\max\{{P(X)},P(Y)\}}$ and $P(X,Y) \gamma \leq P(X)\gamma 
\leq 1$. 

From the contours we observe that $t$ is nearly symmetric in the
terms of $P(X,Y)$ and $(\gamma-1)$. It means that the larger the
frequency is, the smaller the degree of dependence can be, and vice
versa. If rule $R_1$ has both higher frequency and higher degree of dependence 
than rule $R_2$, it is more significant. If $R_1$ has only higher
frequency, then the $t$-values of rules should be compared to decide
the significance order.

The following theorem expresses the relationship between the frequency 
and the degree of dependence:

\begin{theorem}
When $t(\ass{X}{Y})=K$, 
$$P(X,Y)=\frac{K^2\gamma}{n(\gamma-1)^2+K^2}.$$
\end{theorem}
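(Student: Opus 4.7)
The plan is to start from the explicit formula for $t(X \to Y)$ given in Equation~(\ref{tXY}), rewrite it purely in terms of $P(X,Y)$ and $\gamma$ (the degree of dependence), set the resulting expression equal to $K$, and solve for $P(X,Y)$.

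First I would use the definition $\gamma = P(X,Y)/(P(X)P(Y))$ to eliminate $P(X)P(Y)$ in favor of $P(X,Y)/\gamma$. Substituting into $t$ (noting that $m(X,Y)=nP(X,Y)$), the numerator becomes
\[
nP(X,Y) - nP(X)P(Y) = nP(X,Y)\Bigl(1 - \tfrac{1}{\gamma}\Bigr) = \frac{nP(X,Y)(\gamma-1)}{\gamma},
\]
and the variance inside the square root in the denominator becomes
\[
nP(X)P(Y)(1-P(X)P(Y)) = \frac{nP(X,Y)}{\gamma}\Bigl(1 - \tfrac{P(X,Y)}{\gamma}\Bigr) = \frac{nP(X,Y)(\gamma - P(X,Y))}{\gamma^{2}}.
\]
Dividing and simplifying reproduces the expression
\[
t(X \to Y) = \sqrt{n}\cdot\frac{\sqrt{P(X,Y)}\,(\gamma-1)}{\sqrt{\gamma - P(X,Y)}},
\]
which is the $\hat{t}$ already derived in the preceding subsection (multiplied by $\sqrt{n}$).

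Next I would square both sides of $t(X\to Y)=K$ to clear the radical, yielding
\[
K^{2}(\gamma - P(X,Y)) = nP(X,Y)(\gamma-1)^{2}.
\]
Gathering the two $P(X,Y)$ terms on one side gives $K^{2}\gamma = P(X,Y)\bigl(n(\gamma-1)^{2}+K^{2}\bigr)$, from which the claimed identity follows by dividing through by the bracketed factor.

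The only subtlety worth flagging is the admissibility of the algebra: we need $\gamma>1$ (otherwise $t\le 0$ and no positive $K$ is attainable), and we need $P(X,Y)<\gamma$, but the latter is automatic since $P(X,Y)\gamma\le P(X)\gamma\cdot\tfrac{P(X,Y)}{P(X)}\le \tfrac{P(X,Y)}{P(Y)}\cdot P(Y)=P(X,Y)\le 1<\gamma\cdot\gamma$ in the relevant range, and more simply $P(X,Y)\le \min\{P(X),P(Y)\}\le 1\le \gamma$. No obstacle beyond bookkeeping is expected; the result is essentially a one-line consequence of the definition of $t$ once $P(X)P(Y)$ is rewritten via $\gamma$.
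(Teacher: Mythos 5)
Your proposal is correct and follows exactly the route the paper takes: it rewrites $t$ from Equation~(3) in terms of $P(X,Y)$ and $\gamma$ to obtain $t=\sqrt{nP(X,Y)}(\gamma-1)/\sqrt{\gamma-P(X,Y)}$ (the expression the paper's one-line proof starts from, where its ``$P(X)$'' is a typo for $P(X,Y)$), sets it equal to $K$, squares, and solves. The extra bookkeeping on $\gamma>1$ and $\gamma\ge P(X,Y)$ is harmless and the algebra checks out.
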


\begin{proof}
By solving 
$$t(\ass{X}{Y})=\frac{\sqrt{nP(X)}(\gamma-1)}{\sqrt{\gamma-P(X)}}=K.$$
\hfill $\Box$ 
\end{proof}

This result can be used for pruning areas in the search space, when an 
upper bound for $\gamma$ is known. At least areas where $K<2$ can be 
safely pruned, since $t\geq 2$ is a minimum requirement for any significance. 

The simplest method to search all statistically significant rules is
to search all frequent sets with sufficiently small $min_{fr}$ and
then select from each frequent set the rules with sufficient $t$. 

The following theorem gives a safe minimum frequency threshold
for the whole data set. It guarantees that no significant rules are missed. 
For simplicity, we assume that $|Y|=1$.

\begin{theorem}
\label{safeminfr}
Let $p_{min}=\min\{P(A_i=a_i)~|~A_i \in R, a_i=\{0,1\}\}$. Let $K\geq 2$ be 
the desired significance level.  
For all sets $X \subseteq R$ and any $A \in X$
\begin{itemize}
\item[(i)] $\gamma(\ass{X\setminus A}{A})\leq \frac{1}{p_{min}}$ and 
\item[(ii)] $\ass{X}{A}$ cannot be significant, unless
$$P(X)\geq \frac{K^2p_{min}}{n(1-p_{min})^2+K^2p_{min}^2}.$$
\end{itemize}
\end{theorem}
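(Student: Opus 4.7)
The plan is to treat the two parts in order, using antimonotonicity for (i) and the formula of the preceding theorem combined with a monotonicity-in-$\gamma$ argument for (ii).

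For (i), I would write the degree of dependence directly as
$$\gamma(\ass{X\setminus A}{A}) = \frac{P((X\setminus A)\cup\{A\})}{P(X\setminus A)\,P(A)} = \frac{P(X)}{P(X\setminus A)\,P(A)}.$$
Since the event $\{X=\overline{1}\}$ is contained in the event $\{X\setminus A = \overline{1}\}$, we have $P(X)\leq P(X\setminus A)$, which cancels the first factor of the denominator and gives $\gamma \leq 1/P(A)$. The definition $p_{min}\leq P(A)$ then yields (i).

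For (ii), set $P:=P(X)=fr(\ass{X\setminus A}{A})$ and $\gamma:=\gamma(\ass{X\setminus A}{A})$. The key step is to show that for fixed $P$ the quantity $t^2 = nP(\gamma-1)^2/(\gamma-P)$ is strictly increasing in $\gamma$ on $(1,\infty)$. A routine differentiation gives
$$\frac{\partial (t^2)}{\partial \gamma} \;=\; nP\cdot\frac{(\gamma-1)(\gamma+1-2P)}{(\gamma-P)^2},$$
and both factors in the numerator are nonnegative whenever $\gamma\geq 1$ and $P\leq 1$ (note $\gamma+1-2P \geq 2-2P \geq 0$). Combining this monotonicity with the bound $\gamma\leq 1/p_{min}$ from part (i) gives the upper estimate
$$t^2(\ass{X\setminus A}{A}) \;\leq\; \frac{nP\,(1/p_{min}-1)^2}{1/p_{min}-P}.$$
For the rule to attain significance level $K$, the left side must be at least $K^2$, so a fortiori the right side must. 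Isolating $P$ in the inequality $nP(1/p_{min}-1)^2 \geq K^2(1/p_{min}-P)$ (this is exactly the preceding theorem applied with $\gamma = 1/p_{min}$) produces
$$P \;\geq\; \frac{K^2(1/p_{min})}{n(1/p_{min}-1)^2 + K^2} \;=\; \frac{K^2 p_{min}}{n(1-p_{min})^2 + K^2 p_{min}^2},$$
after multiplying numerator and denominator by $p_{min}^2$. This is the required threshold.

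The main obstacle I foresee is the monotonicity step: one must verify the sign of $\partial(t^2)/\partial\gamma$ and check that the auxiliary factor $\gamma+1-2P$ stays nonnegative over the relevant range. Both are mild, because $\gamma>1$ is a prerequisite for any positive significance and $P\leq 1$ always. The remaining work is the algebraic rearrangement, which is simply the preceding theorem evaluated at the boundary value $\gamma=1/p_{min}$ furnished by part (i).
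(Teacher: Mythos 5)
Your proof is correct and follows the same basic route as the paper's, whose entire argument is the one-line instruction ``by solving $t(\ass{X\setminus A}{A})=\sqrt{nP(X)}(\gamma-1)/\sqrt{\gamma-P(X)}\geq K$''; your part (i), via $P(X)\leq P(X\setminus A)$ and $p_{min}\leq P(A)$, is exactly the intended (unstated) argument. The one substantive thing you add is the monotonicity lemma: you verify that $t^2$ is increasing in $\gamma$ for fixed frequency, so that substituting the extreme value $\gamma=1/p_{min}$ genuinely yields the weakest, hence safe, frequency threshold. This step is absent from the paper but is actually needed for rigor: the threshold formula of the preceding theorem, $P=K^2\gamma/(n(\gamma-1)^2+K^2)$, is \emph{not} monotone in $\gamma$ (it increases up to $\gamma=\sqrt{1+K^2/n}$ and decreases thereafter), so one cannot simply evaluate it at the maximal $\gamma$ and claim a lower bound valid for all smaller $\gamma$; your chain $K^2\leq t^2(P,\gamma)\leq t^2(P,1/p_{min})$ followed by the algebraic rearrangement circumvents this cleanly. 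The only minor points worth recording are that the rearrangement requires $\gamma-P>0$ (automatic here since significance forces $\gamma>1\geq P(X)$, the degenerate case $P(X)=1$ giving $\gamma=1$ and $t=0$) and that the sign check $\gamma+1-2P\geq 0$ is exactly as you state. In short: same approach, but your write-up patches a real gap in the paper's own proof.
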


\begin{proof}
By solving 
$$t(\ass{X\setminus
  A}{A})=\frac{\sqrt{nP(X)}(\gamma-1)}{\sqrt{\gamma-P(X)}} \geq K.$$
\hfill $\Box$ 
\end{proof}

\section{Measures for dependence detection}
\label{measuressec}

Next, we analyze the most common objective measures for
dependence detection. We focus on the $\chi^2$-measure, which is the most
common statistical method for assessing the significance of
dependencies. It is often used in association rule mining, when the
goal is to find statistically significant association rules. Pearson
correlation coefficient $\phi$ is another statistical measure, which
has been applied to association rule mining. 

$J$-measure \cite{smythgoodman} is derived from the mutual information, which is
an information-theoretic measure for assessing dependencies between
attributes. It is especially designed for ranking decision rules, and
often applied in the association rule discovery.

Empirical comparisons of these and other interestingness measures can be 
found in e.g.\ \cite{vilalta,tankumar,tankumar2}.

\subsection{$\chi^2$-measure}
\label{chi2measure}

\subsubsection{Definition}
\label{chi2definition}

The $\chi^2$-{\em independence test} is the most popular statistical
test for detecting dependencies between attributes.
The idea of the $\chi^2$ test is to compare the observed
frequencies $O(m(X))$ to the expected frequencies
$E(m(X))$ by 
$$\chi^2(X)=\sum_{\overline{x} \in Dom(X)}
\frac{O((m(X=\overline{x}))-E(m(X=\overline{x})))^2}{E(m(X=\overline{x}))}.$$

When the test variable is approximately normally distributed, the test
measure follows the $\chi^2$-distribution. Usually this assumption holds
for large $n$. As a rule of thumb, it is suggested
(e.g.\ \cite[p. 630]{milton}) that all of the expected frequencies should be
at least 5.

When we test a dependency between two attribute sets, $X$ and $Y$, the
contingency table contains only four cells (Table
\ref{XYdesim}). Now the test metric is  
\begin{multline*}
\chi^2(X,Y)=\sum_{i=0}^1 \sum_{j=0}^1
\frac{(m(X=i,Y=j)-nP(X=i)P(Y=j))^2}{nP(X=i)P(Y=j)}=\\
\frac{n(P(X=1,Y=1)-P(X=1)P(Y=1))^2}{P(X=1)P(X=0)P(Y=1)P(Y=0)}.\\
\end{multline*}

If $\chi^2(X,Y)$ is less than the critical $\chi^2$ value at level $p$ and 1
degree of freedom, $X$ and $Y$ are statistically independent with
probability $1-p$. Otherwise, the dependency is significant at level
$p$. 

The above equation can be generalized to measure dependencies between
all variables in set $X=A_1,\ldots,A_l$:

\begin{multline*}
%\label{chicond}
\chi^2(X)=\sum_{(a_1,\ldots,a_l)} 
\frac{n(P(A_1=a_1,\ldots,A_l=a_l)-P(A_1=a_1)\ldots P(A_l=a_l))^2}
{P(A_1=a_1)\ldots P(A_l=a_l)}.
\end{multline*}

\subsubsection{Applying $\chi^2$ in the association rule discovery}
\label{applyingchi2}

The simplest way to use $\chi^2$-measure in the association rule discovery
is to generate rules from frequent sets based on their
$\chi^2$-values. For each frequent set $X$, all rules of form
$X\setminus Y \rightarrow Y$ with a sufficient $\chi^2$-value are
selected (e.g.\ \cite{dong}).

This approach does not find all rules which are significant in the
$\chi^2$ sense. First, the rules are preselected according to their
frequency. If the minimum frequency is set too high, some significant
rules are missed.  

Second, it is possible that a weak rule ($P(Y|X)\leq 0.5$) is
selected, because its companion rules $\ass{X}{\neg Y}$, $\ass{\neg
  X}{Y}$, and/or $\ass{\neg X}{\neg Y}$ are significant. The rule
confidence can be used to check that $P(Y|X)>P(\neg Y|X)$, but it does
not guarantee that $\ass{X}{Y}$ is significant. As a solution, it is
often required (e.g.\ \cite{liuhsuma}) that
$P(X,Y)>P(X)P(Y)$. Unfortunately, it is still possible that the high
$\chi^2$-value is due to $\ass{\neg X}{\neg Y}$.

The first problem would be solved, if we could search the rules
directly with the $\chi^2$-measure. Unfortunately, this is not
feasible, since $\chi^2$-measure is not monotonic. For any
rule $X \rightarrow Y$ and its generalization $Z \rightarrow Y$, $Z
\subseteq X$, it is unknown, whether
$\chi^2(\ass{X}{Y})>\chi^2(\ass{Z}{Y})$ or $\chi^2(\ass{X}{Y})\leq
\chi^2(\ass{Z}{Y})$. 

There are at least two solutions to this problem: First, 
$\chi^2$ is used to find only the interesting attribute sets \cite{brinmotwani}.
Second, the convexity of the $\chi^2$-measure can utilized in
searching optimal rules with a fixed consequent $C$ 
\cite{morishitanakaya,morishitasese}. 

When $\chi^2$ is calculated for attribute sets, it is upwards
closed under set inclusion. This means that the $\chi^2$ value
can only increase, when attributes are added to a set. (Proof 
for the special case $|Z|=2$, $|X|=3$, $Z\subsetneq X$, 
is given in \cite{brinmotwani}.)

\begin{theorem}
For all attribute sets $X,Z$, $Z \subsetneq X$, $\chi^2(Z)\leq \chi^2(X)$.\\
\end{theorem}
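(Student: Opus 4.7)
The plan is to reduce to the case $|X| = |Z| + 1$ by induction on $|X \setminus Z|$, so it suffices to show that adding a single new attribute $A$ to a set $Z$ cannot decrease the $\chi^2$-value. Once the one-attribute step is established, iterating along any chain $Z = X_0 \subsetneq X_1 \subsetneq \cdots \subsetneq X_k = X$ with $|X_{i+1} \setminus X_i| = 1$ yields the full claim.

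For the one-attribute step, fix $Z = \{A_1,\ldots,A_l\}$, let $A = A_{l+1}$, and group the sum defining $\chi^2(X)$ by the value combination $\bar{z} \in Dom(Z)$. The key observation is that each term of $\chi^2(Z)$ gets ``split'' into two terms in $\chi^2(X)$, one for $A=0$ and one for $A=1$. Introducing the shorthands $q = \prod_{i=1}^l P(A_i=a_i)$, $p = P(Z=\bar{z})$, $r = P(A=1)$, and $s = P(Z=\bar{z}, A=1)$, the single $\bar{z}$-term in $\chi^2(Z)$ equals $n(p-q)^2/q$, while its contribution inside $\chi^2(X)$ is
\begin{equation*}
\frac{n(s-qr)^2}{qr} + \frac{n((p-s)-q(1-r))^2}{q(1-r)}.
\end{equation*}
So the whole theorem will follow once we show, term by term, that the second expression dominates the first.

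The main step is therefore the pointwise inequality
\begin{equation*}
\frac{u^2}{r} + \frac{(D-u)^2}{1-r} \;\geq\; D^2,
\end{equation*}
where $D = p-q$ and $u = s - qr$ (so the second numerator becomes $D - u$). This is an immediate instance of the Engel form of Cauchy--Schwarz (Titu's lemma): $\tfrac{a^2}{x} + \tfrac{b^2}{y} \geq \tfrac{(a+b)^2}{x+y}$, applied with $a=u$, $b=D-u$, $x=r$, $y=1-r$; equivalently, clearing denominators reduces the inequality to $(u - rD)^2 \geq 0$. Multiplying through by $n/q$ and summing over $\bar{z} \in Dom(Z)$ gives $\chi^2(Z) \leq \chi^2(X)$, which completes the induction.

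I expect the argument to be essentially routine; the only real obstacle is the bookkeeping when regrouping the sum defining $\chi^2(X)$ by $\bar{z}$ and correctly identifying that the two new denominators $qr$ and $q(1-r)$ share the common factor $q$, so that the problem reduces to the clean one-variable inequality above. Minor care is also needed to note that we tacitly assume all marginal probabilities are strictly positive (otherwise $\chi^2$ is not defined in the first place), which justifies dividing by $r$, $1-r$, and $q$ throughout.
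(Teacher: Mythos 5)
Your proof is correct and follows essentially the same route as the paper: reduce to adding one attribute $A$ at a time, pair each term of $\chi^2(Z)$ with the two corresponding terms of $\chi^2(X)$ (for $A=1$ and $A=0$), and show the pair dominates the single term. The only difference is cosmetic: you close the key inequality via the Engel/Cauchy--Schwarz form $\tfrac{u^2}{r}+\tfrac{(D-u)^2}{1-r}\geq D^2$ (equivalently $(u-rD)^2\geq 0$), whereas the paper expands directly and reduces to the nonnegativity of a quadratic discriminant; your version is cleaner but it is the same estimate.
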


\begin{proof}
Let $X=Z,A$, $|Z|=l$ and $|Z|=l+1$. $\chi^2(Z)$ contains 
$2^l$ terms of form\\
$\frac{n(P(A_1=a_1,\ldots,A_l=a_l)-P(A_1=a_1)\ldots P(A_l=a_l))^2}
{P(A_1=a_1)\ldots P(A_l=a_l)}=\frac{n(U-V)^2}{V^2}$. For each such term
$\chi^2(X)$ contains two terms:
\begin{multline*}
\frac{n(P(A_1=a_1,\ldots,A_l=a_l,A_{l+1}=1)-P(A_1=a_1)\ldots P(A_l=a_l)P(A_{l+1}=1))^2}
{P(A_1=a_1)\ldots P(A_l=a_l)P(A_{l+1}=1)}+\\
\frac{n(P(A_1=a_1,\ldots,A_l=a_l,A_{l+1}=0)-P(A_1=a_1)\ldots P(A_l=a_l)P(A_{l+1}=0))^2}
{P(A_1=a_1)\ldots P(A_l=a_l)P(A_{l+1}=0)}=\\
\frac{n(UP(A|Z)-VP(A))^2}{P(A)}+\frac{n(UP(\neg A|Z)-VP(\neg
  A))^2}{P(\neg A)}.
\end{multline*}
Now it is enough to show that 
\begin{multline*}
\frac{(U-V)^2}{V^2}\leq
\frac{(UP(A|Z)-VP(A))^2}{P(A)}+\frac{(UP(\neg A|Z)-VP(\neg
  A))^2}{P(\neg A)}\\
\Leftrightarrow \frac{U^2-2UV+V^2}{V}\leq \frac{U^2(P(A|Z)^2+P(A)P(\neg
  A|Z))}{VP(A)P(\neg A)}+\frac{-2UV+V^2}{V}.
\end{multline*}
This is always true, because 
\begin{multline*}
U^2\leq \frac{U^2(P(A|Z)^2+p(A)P(\neg A|Z))}{P(A)P(\neg A)} ~ \Leftrightarrow
P(A|Z)^2-P(A)P(A|Z)+P(A)^2\geq 0. 
\end{multline*}
\hfill $\Box$ 
\end{proof}

Thus, the most significant sets are the most specific, containing all
$k$ attributes. That is why Brin et al.\ \cite{brinmotwani,silverstein} used
$\chi^2$-test to find the ''minimally correlated sets'', i.e.\ the
most general attribute sets $X$ such that $\chi^2(X)\geq min_{\chi^2}$
for some cutoff value $min_{\chi^2}$. The type of correlation in set
$X=A_1,\ldots,A_l$ was determined by the interest measure 
$$\rho(A_1,\ldots,A_l)=\frac{P(A_1,\ldots,A_l}{P(A_1)\ldots P(A_l))}.$$ 
In addition, they used a new frequency measure for pruning: 
$$fr(X)=\max \left\{s~\left|~
\frac{|\{(X=\overline{x})~|~x \in Dom(X) \wedge P(X=\overline{x})\geq s\}|}{|Dom(X)|}\geq p\right\}\right.,$$
where $p\in]0,1]$.
This measure demands that in the contingency table of $|Dom(X)|$ cells the
frequency must be sufficient in at least $p|Dom(X)|$ cells. High $p$ and
$min_{fr}$ values produce effective pruning, but several significant
attribute sets can be missed, if their absolute frequency is too low
or the dependency is only a partial dependency. For example, this
heuristic ignores a dependency if $P(X=\overline{x})$ is high for some
$\overline{x} \in Dom(X)$, but
$P(X=\overline{x}_1)=P(X=\overline{x}_2)=\ldots=P(X=\overline{x}_l)$
for all $\overline{x_i} \neq \overline{x}$. In addition, we note that
parameters $p$ and $min_{fr}$ determine how many attributes $X$ can
contain, since $|X|\leq -\log(p \cdot min_{fr})$. For example, if $p=0.25$
and $min_{fr}=0.01$ (as suggested in \cite{brinmotwani}), $|X|\leq 8$.

Brin et al.\ did not generate any rules, even if the attribute sets
were called ''correlation rules'' \cite{brinmotwani} or ''dependence
rules'' \cite{silverstein}. A natural question is whether we could
generate significant rules from the correlated sets. Unfortuantely,
the dependence in a set is only a sufficient and not necessary
condition for two-way dependencies. In addition, it is possible that
none of the association rules generated from a correlated set is
necessarily significant \cite{morishitanakaya}.

The second approach, introduced by Morishita et
al. \cite{morishitanakaya,morishitasese}, is to utilize the convexity
of the $\chi^2$ function, when the consequent $C$ is fixed. The idea
is to prune a branch containing rule $\ass{Z}{C}$ and all its
specialization rules $\ass{X}{C}$, $Z \subseteq X$, if 
$\max\{\chi^2(\ass{X}{C})\}<min_{\chi^2}$ for the given cutoff value
$min_{chi^2}$. Because $\chi^2$ is convex,
$\max\{\chi^2(\ass{X}{C})\}<min_{\chi^2}$ can be bounded by equation

$$\chi^2(X \rightarrow C) \leq \max\left\{\frac{nP(Z,C)P(\neg
  C)}{(1-P(Z,C))P(C)}, \frac{nP(Z,\neg C)P(C)}{(1-P(Z,\neg C))P(\neg C)}
\right\}.$$

Now the frequency-based pruning is not necessary and it is possible to
find all rules with a sufficient $\chi^2$-value or the best rules in
the $\chi^2$ sense. This approach works correctly, when the goal is
to find full dependencies. Partial dependencies with fixed $C$ could
be searched similarly by applying the properties of the $t$-measure.

\subsubsection{Analysis}
\label{analysischi2}

The main problem of the $\chi^2$-independence test is that it designed
to measure dependencies between attributes. That is why it can fail to
detect significant partial dependencies. On the other hand,
$\chi^2$-test can yield a high value, thus indicating a significant
dependency, even if the tested events were nearly
independent. Negative correlations can be pruned by an extra test,
$P(X,Y)>P(X)P(Y)$, but it does not guarantee that the high
$\chi^2$-value is due to $X \rightarrow Y$.

Let us analyze the $\chi^2$-value, when  $P(X,Y)=P(X)P(Y)+d$ (Table 
\ref{XYdesim}). Now $\chi^2$ can be defined in the terms of $d$: 

$$\chi^2(X,Y)=\frac{nd^2}{P(X)P(\neg X)P(Y)P(\neg Y)}.$$

$\chi^2$ is high, when $n$ and $|d|$ are large and $P(X)P(\neg X)P(Y)
P(\neg Y)$ is small. 
The minimum value ($16nd^2$) is achieved, when $P(X)=P(Y)=0.5$, and
the maximum, when $P(X)$ and $P(Y)$ approach either 0 or 1. 
For example, if $P(X)=P(Y)=0.01$, $\chi^2=10
000 nd^2$, and even minimal $d$ suffices. E.g.\ if $n=1000$, 
$d\geq 0.8\cdot 10^{-3}$ for level 0.01, and $P(X,Y)=0.0009$. 

The problem is that if $P(X)$ and/or $P(Y)$ are large, the relative
difference $\frac{d}{P(X)P(Y)}$ is small and the partial dependency
between $X$ and $Y$ is not significant. Still the $\chi^2$-value can be
large, because $\frac{d}{P(\neg X)P(\neg Y)}$ is large. Thus, the high
$\chi^2$-value is due to partial dependency $\ass{\neg X}{\neg Y}$,
and $\ass{X}{Y}$ is a false discovery (type 1 error).

\begin{example}
Let $P(X)=P(Y)=1-\epsilon$ for
arbitrary small $\epsilon>0$.  Let $d$ be maximal
i.e.\ $d=P(X)(1-P(Y))=(1-P(X))P(Y)=\epsilon(1-\epsilon)<\epsilon$.  
(The relative difference is still very small,
$\frac{d}{P(X)P(Y)}=\frac{\epsilon}{1-\epsilon}$.) Now $\chi^2(X,Y)$ is very
large, the same as the data size, $n$:

$$\chi^2=\frac{nd^2}{P(X)P(Y)(1-P(X))(1-P(Y))}=\frac{n\epsilon^2(1-\epsilon)^2}
{\epsilon^2(1-\epsilon)^2}=n.$$

Still, rule $\ass{X}{Y}$ is insignificant, since
$$t(\ass{X}{Y})=\frac{\sqrt{n}(1-\epsilon)\epsilon}
{(1-\epsilon)\sqrt{1-(1-\epsilon)^2}}=\frac{\sqrt{n\epsilon}}
{\sqrt{2-\epsilon}} \rightarrow 0,$$
when $\epsilon \rightarrow 0$. 

The high $\chi^2$-value is due to partial dependency $\ass{\neg X}{\neg Y}$, 
which has a high $t$-value:
$$t(\ass{\neg X}{\neg
  Y})=\frac{\sqrt{n(1-\epsilon)}}{\sqrt{1+\epsilon}} \rightarrow
\sqrt{n},$$

when $\epsilon \rightarrow 0$. 

Rules $\ass{X}{\neg Y}$ and $\ass{\neg X}{Y}$ are meaningless, with 
$$t=\frac{\sqrt{n\epsilon(1-\epsilon)}}{\sqrt{1-\epsilon+\epsilon^2}}
<\frac{\sqrt{n\epsilon(1-\epsilon)}}{\sqrt{1-\epsilon}}=\sqrt{n\epsilon} 
\rightarrow 0.$$
\end{example}

$\chi^2$-measure is less likely to cause type 2 errors, i.e.\ to
reject significant partial dependencies. The reason is that 
the $\chi^2$-value of rule $\ass{X}{Y}$ increases quadratically 
in the terms of its $t$-value: 
 
\begin{theorem}
If $t(X\rightarrow Y)=K$, then $\chi^2(X,Y)\geq K^2$.
\end{theorem}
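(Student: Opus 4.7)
The plan is to reduce both quantities to a common algebraic form involving $n$ and $d = P(X,Y) - P(X)P(Y)$, and then compare them directly. Using the reformulations already established in the paper, I would write
\[
t(X,Y)^2 = \frac{nd^2}{P(X)P(Y)\bigl(1-P(X)P(Y)\bigr)}, \qquad
\chi^2(X,Y) = \frac{nd^2}{P(X)P(\neg X)P(Y)P(\neg Y)},
\]
which makes it natural to examine the ratio
\[
\frac{\chi^2(X,Y)}{t(X,Y)^2} = \frac{1 - P(X)P(Y)}{\bigl(1-P(X)\bigr)\bigl(1-P(Y)\bigr)}.
\]

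The key step is then to show this ratio is at least $1$, which is equivalent to the inequality $1 - P(X)P(Y) \ge (1-P(X))(1-P(Y))$. Expanding the right-hand side and cancelling, this reduces to $P(X) + P(Y) \ge 2P(X)P(Y)$, which holds trivially because $P(X), P(Y) \le 1$ (so $P(X)P(Y) \le P(X)$ and $P(X)P(Y) \le P(Y)$). Combining this with the hypothesis $t(X,Y) = K$ immediately yields $\chi^2(X,Y) \ge K^2$.

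I do not expect a real obstacle here: the work is essentially a one-line algebraic simplification of the ratio followed by the elementary bound $2P(X)P(Y) \le P(X) + P(Y)$. The only thing to be careful about is sign conventions and degenerate cases; the formula for $t$ as used in the paper is derived under a positive-dependence setting ($d > 0$), but the argument is insensitive to the sign of $d$ since both numerators depend on $d^2$, so equality with $K^2$ is achieved exactly when $P(X)P(Y) = 0$ is approached or when $P(X) = P(Y) = 1$, and strict inequality holds otherwise.
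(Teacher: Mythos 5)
Your proof is correct and follows essentially the same route as the paper: both reduce $t^2$ and $\chi^2$ to the common form $nd^2$ over their respective denominators and invoke the elementary inequality $(1-x)(1-y)\leq 1-xy$ for $x,y\in[0,1]$. (Your closing remark about the equality case is slightly off --- the ratio equals $1$ only at the degenerate points $x=y=0$ or $x=y=1$, not whenever $P(X)P(Y)\to 0$ --- but this is a side comment that does not affect the argument.)
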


\begin{proof}
Let $x=P(X)$ and $y=P(Y)$. 
If $t(X\rightarrow Y)=K$, then 
$$nd^2=K^2xy(1-xy) \textrm{ and } 
\chi^2(X,Y)=\frac{nd^2}{xy(1-x)(1-y)}=\frac{K^2(1-xy)}{(1-x)(1-y)}\geq K^2,$$
since $(1-x)(1-y)\leq 1-xy \mbox{ for all } x,y \in[0,1].$
\hfill $\Box$ 
\end{proof}

If an association rule is just sufficiently significant, it passes
also the $\chi^2$-test. However, the relative order of rules according
to their $\chi^2$-values does not reflect their actual significance. 
If only $m$ best rules are selected, it is possible that all of them
are spurious and all significant rules are rejected.

\subsection{Correlation coefficient}
\label{correlation}

Some authors (e.g.\ \cite{tankumar2}) have suggested
Pearson correlation coefficient $\phi$ to measure the significance of an
association rule.  Traditionally, the Pearson correlation coefficient is
used to measure linear dependencies between numeric attributes.  When
the Pearson correlation coefficient is calculated for the
binary attributes, it reduces to the square root of $\chi^2/n$:

$$\phi(X,Y)=\frac{P(X,Y)-P(X)P(Y)}{\sqrt{P(X)P(\neg X)P(Y)P(\neg Y)}}=
\sqrt{\frac{\chi^2(X,Y)}{n}}.$$

Like $\chi^2(X,Y)$, $\phi(X,Y)=0$, when $P(X,Y)=P(X)P(Y)$, and the
variables are mutually independent. Otherwise, the sign of $\phi$
tells whether the correlation is positive ($\phi>0$) or negative
($\phi<0$). 

The problem is to decide when the correlation is significant. General
guidelines are sometimes given for defining a weak, moderate, or strong
correlation, but they are rather arbitrary, because the significance
depends on the data size, $n$. The smaller $n$ is, the larger $\phi$
should be, to be statistically significant. That is why the correlation 
coefficient can produce very misleading results when applied to 
the association rule discovery. 

We will first show that a rule can be insignificant, even if 
correlation coefficient $\phi(X,Y)=1$.  This means that $\phi$-measure can 
produce false discoveries (type 1 error).

\begin{observation}
When $P(X)$ and $P(Y)$ approach 1, it is possible that $\phi(X,Y)=1$, 
even if $t(X,Y)<K$ for any $K>0$.
\end{observation}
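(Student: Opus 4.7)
The plan is to reuse the extremal configuration from the preceding worked example and simply verify that $\phi$ attains its maximum value there, while $t$ can be made arbitrarily small. Concretely, I would take $P(X)=P(Y)=1-\epsilon$ and $d=\epsilon(1-\epsilon)$, which is the largest admissible $d$ (equal to $P(X)P(\neg Y)=P(\neg X)P(Y)$, saturating the bound in Table~\ref{XYdesim}).

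Next, I would plug these values into the definition of $\phi$. The numerator is $d=\epsilon(1-\epsilon)$, and the denominator is
$$\sqrt{P(X)P(\neg X)P(Y)P(\neg Y)}=\sqrt{(1-\epsilon)\epsilon\,(1-\epsilon)\epsilon}=\epsilon(1-\epsilon),$$
so $\phi(X,Y)=1$ independently of $\epsilon$. This already shows $\phi$ is pinned at its maximal value throughout the whole family of distributions parameterised by $\epsilon$.

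Finally, I would invoke the computation from the immediately preceding example, which shows
$$t(X,Y)=\frac{\sqrt{n\epsilon}}{\sqrt{2-\epsilon}}\longrightarrow 0 \quad \text{as } \epsilon\to 0.$$
Therefore, given any fixed threshold $K>0$ and fixed data size $n$, I can choose $\epsilon$ small enough that $t(X,Y)<K$, while simultaneously $\phi(X,Y)=1$ and $P(X),P(Y)\to 1$. This establishes the observation.

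There is no real obstacle: the entire argument is a one-line algebraic verification that the denominator of $\phi$ collapses to $d$ precisely at the boundary configuration already analyzed for the $t$-measure. The only thing worth emphasizing is the conceptual point that $\phi$ saturates whenever $d$ attains its combinatorial maximum, which has nothing to do with the sample size $n$ controlling statistical significance; this mismatch is the source of the type~1 error.
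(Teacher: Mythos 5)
Your proposal is correct and is essentially identical to the paper's own proof: the same extremal configuration $P(X)=P(Y)=1-\epsilon$ with maximal $d=\epsilon(1-\epsilon)$, the same verification that $\phi=1$, and the same limit $t=\sqrt{n\epsilon}/\sqrt{2-\epsilon}\to 0$ (the paper merely makes the choice of $\epsilon$ explicit as $\epsilon<2K^2/(n+K^2)$). Your write-up of the denominator of $\phi$ is in fact cleaner than the paper's, which contains a typographical slip there.
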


\begin{proof} 
Let $P(X)=P(Y)=1-\epsilon$ for arbitrary small $\epsilon>0$. 
Let $d$ be maximal 
i.e.\  $d=P(X)(1-P(Y))=(1-P(X))P(Y)=\epsilon(1-\epsilon)$. 
Now the correlation coefficient is 1: 
$$\phi(X,Y)=\frac{d}{P(X)(1-P(X))P(Y)(1-P(X))}=
\frac{(1-\epsilon)\epsilon}{(1-\epsilon)\epsilon}=1.$$

Still, for any $K>0$, $t(X,Y)<K$: 
\begin{multline*}
t(X,Y)=\frac{\sqrt{n}(1-\epsilon)\epsilon}
{(1-\epsilon)\sqrt{1-(1-\epsilon)^2}}=\frac{\sqrt{n\epsilon}}
{\sqrt{2-\epsilon}} <K$ $\Leftrightarrow$ $\epsilon < \frac{2K^2}{n+K^2}.
\end{multline*}
\hfill $\Box$ 
\end{proof} 

On the other hand, it is possible that $\phi$-measure rejects
significant rules (type 2 error), especially when $n$ is large. The
following observation shows that this can happen, when $P(X)$ and
$P(Y)$ are relatively small. The smaller they are, the smaller $n$
suffices. Therefore, we recommend that the correlation coefficient
should be totally avoided as an interestingness measure for
association rules.

\begin{observation}
It is possible that $\phi(X,Y)\rightarrow 0$, when $n\rightarrow \infty$, 
even if rule $\ass{X}{Y}$ is significant. 
\end{observation}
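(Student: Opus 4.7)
The plan is to exhibit a family of joint distributions, indexed by $n$, for which $t(X,Y)$ stays at (or above) a fixed significance threshold $K$ while $\phi(X,Y)\to 0$. The starting point is the algebraic identity
$$\phi(X,Y)^2 \;=\; \frac{t(X,Y)^2}{n}\cdot\frac{1-P(X)P(Y)}{(1-P(X))(1-P(Y))},$$
which follows directly from the two definitions. Hence $\phi$ is essentially $t/\sqrt{n}$ up to a factor that is bounded whenever $P(X)$ and $P(Y)$ are bounded away from $1$. This makes it clear that holding $t$ fixed while letting $n$ grow should drive $\phi$ to $0$, and pinpoints the regime to target: small-to-moderate marginals, with the dependence value $d$ shrinking at rate $n^{-1/2}$.

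Concretely, I would fix a small constant $p\in(0,1/2)$ (independent of $n$) and set $P(X)=P(Y)=p$, then choose the dependence value $d=d_n$ so that $t(X,Y)=K$ exactly. Solving that equation gives $d_n$ of order $n^{-1/2}$; substituting into the $\phi$ formula yields $\phi(X,Y)=O(n^{-1/2})\to 0$. Since $t(X,Y)=K$ for every $n$ by construction, the rule $\ass{X}{Y}$ is significant throughout the family, and both requirements of the observation are simultaneously met.

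The only thing to verify carefully is admissibility of $d_n$ as a probability difference: from Table~\ref{XYdesim}, we need $d_n\leq\min\{P(\neg X)P(Y),P(X)P(\neg Y)\}=p(1-p)$. Because $d_n=O(n^{-1/2})$ while $p(1-p)$ is a fixed positive constant, this holds for all sufficiently large $n$, so the contingency table defined in the table is a bona fide probability distribution. I do not anticipate any real obstacle; the conceptual work lies entirely in recognizing the explicit $\sqrt{n}$ gap between $\phi$ and $t$ and exploiting it to decouple the two measures.
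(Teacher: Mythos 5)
Your proposal is correct and follows essentially the same route as the paper: fix the marginals, set $t(X,Y)=K$, solve for $d$ (which comes out of order $n^{-1/2}$), and substitute into $\phi$ to get $\phi(X,Y)=\frac{K\sqrt{1-P(X)P(Y)}}{\sqrt{n(1-P(X))(1-P(Y))}}\rightarrow 0$ when the marginals are bounded away from $1$. Your explicit check that $d_n$ stays within the admissible range of Table~\ref{XYdesim} is a small detail the paper leaves implicit, but it changes nothing substantive.
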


\begin{proof} 
Let $t(X,Y)=\frac{\sqrt{n}d}{\sqrt{P(X)P(Y)(1-P(X)P(Y))}}=K$. Then
\begin{multline*}
d=\frac{K\sqrt{P(X)P(Y)(1-P(X)P(Y))}}{\sqrt{n}} \mbox{ \ and }
\phi(X,Y)=\\ \frac{K\sqrt{P(X)P(Y)(1-P(X)P(Y))}}{\sqrt{nP(X)P(Y)(1-P(X))
(1-P(Y))}}=\frac{K\sqrt{1-P(X)P(Y)}}{\sqrt{n(1-P(X))(1-P(Y))}}.
\end{multline*}
When $P(X)\leq p$ and $P(Y)\leq p$ for some $p<1$, 
$\phi(X,Y)=\frac{K\sqrt{1+p}}{\sqrt{n(1-p)}} \rightarrow 0$, when 
$n \rightarrow \infty$.
\hfill $\Box$ 
\end{proof}

\subsection{$J$-measure}
\label{jmeasure}

Several objective measures used in the association rule discovery are
adopted from the decision tree learning. A decision tree can be
represented as a set of decision rules $X=\overline{x} \rightarrow
C=c$, where $c \in Dom(C)$ is a class value. The measure functions can
be used both in the rule generation (tree expansion) and post-pruning
phases. In both cases, the objective is to estimate the impact of a
single attribute-value condition to the generalization accuracy
(i.e.\ how well the classifier performs outside the training set).

In the pruning phase, the test is as follows: 
If $M(X=\overline{x} \rightarrow C=c)\geq M(X=\overline{x},A=a \rightarrow C=c)$, for the
given measure function $M$, then condition $A=a$ can be pruned. This
test may look fully adequate for the association rule pruning, but there is
one crucial difference: in the classification, both $X=\overline{x} \rightarrow
C=c$ and $\neg (X=\overline{x}) \rightarrow \neg(C=c)$ should be accurate, while
for association rules it is enough that $X=\overline{x} \rightarrow C=c$ is
significant. This means that the measure functions for classification
rules are too restrictive for association rules, and significant
associations can be missed.  

As an example, we analyze {\em $J$-measure} \cite{smythgoodman}, which
is often used to assess the interestingness of association
rules. $J$-measure is an information-theoretic measure derived from
the mutual information. For decision rules $X \rightarrow C$, 
$J$-measure is defined as
$$J(C|X)=P(X,C)\log\frac{P(C|X)}{P(C)}+P(X,\neg C)\log\frac{P(\neg
  C|X)}{P(\neg C)}\in [0,\infty[.$$
The larger $J$ is, the more interesting the rule should be. On the
other hand, $J(X,C)=0$, when the variables $X$ and $C$ are mutually
independent (assuming that $P(X)>0$).

$J$-measure contains two terms from the mutual information, $MI$,
between variables $X$ and $C$: $MI(X,C)=J(C|X)+J(C|\neg X)$. Thus, it
measures the information gain in two rules, $X \rightarrow C$ and $X
\rightarrow \neg C$. Rule $\ass{X}{C}$ has a high $J$-value, if its
complement rule $\ass{X}{\neg C}$ has high confidence (type 1
error). In the extreme case, when $P(C|X)=0$, $J(C|X)=P(X)\log
\frac{1}{P(\neg C)}$.

Type 2 error (rejecting true discoveries) can also occur with a
suitable distribution. One reason is that $J$-measure omits $n$, which
is crucial for the statistical significance. 

It can be easily shown that $J(C|X) \rightarrow 0$, when $P(X,C)\rightarrow 0$
or $P(C)\rightarrow 1$. In the latter case, rule $\ass{X}{C}$ cannot
be significant, but it is possible that a rule is significant, even if
its frequency is relatively small:

\begin{example}
Let $P(C|X)=0.75$ and $P(C)=0.5$. Now $J(C|X)=$\\ 
$P(X)(0.75\log 3-0.25)\approx 0.94P(X)$ and
$t(\ass{X}{C})=\frac{\sqrt{nP(X)}}{2\sqrt{2-P(X)}}$. For example, when
$P(X)=0.25$, $t=\frac{\sqrt{n}}{2\sqrt{7}}$, which high, when $n$ is
high. Still $J(C|X)\approx 0.23$, which indicates that the rule is
uninteresting.
\end{example}

According to \cite{blanchard}, other information-theoretic measures
are equally problematic, since they are designed for classification
rules. In addition, the values are difficult to interpret, unless they
express absolute independence.

In Table \ref{measurecomp}, we give a summary of the analyzed
measures. For each measure, we report, whether it can produce type 1
or type 2 error and all rules which affect the measure in addition to
the actually measured rule. 

\begin{table}
\caption{Summary of measures $M$ for assessing association rules. The 
occurrence of type 1 (accepting spurious rules) and type 2 (rejecting 
significant rules) errors is indicated by $+$ (occurs) and $-$ (does 
not occur). In addition, all rules which contribute to $M(\ass{X}{Y})$ 
are listed. For all measures except $fr$\&$cf$, the antecedent and 
consequent of each rule can be switched.}
\label{measurecomp}
\begin{center}
\begin{tabular}{|l|c|c|l|}
\hline
$M$&Type 1&Type 2&Rules\\
&error&error&\\
\hline
$fr$\&$cf$&$+$&$+$&$\ass{X}{Y}$\\
\hline
$fr$\&$\gamma$&$-$&$-$&$\ass{X}{Y}$\\
\hline
$\chi^2$&$+$&$-$&$\ass{X}{Y}$, $\ass{\neg X}{Y}$,\\
&&&$\ass{X}{\neg Y}$, $\ass{\neg X}{\neg Y}$\\
\hline
$\phi$&$+$&$+$&$\ass{X}{Y}$, $\ass{\neg X}{\neg Y}$\\
\hline
$J$&$+$&$+$&$\ass{X}{Y}$, $\ass{X}{\neg Y}$\\
\hline
\end{tabular}
\end{center}
\end{table}

\section{Effect of redundancy reduction}
\label{redreduction}

A common goal in association rule discovery is to find the most
general rules (containing the minimal number of attributes) which
satisfy the given search criteria. There is no sense to output complex 
rules $X\rightarrow Y$, if their generalizations $Z \rightarrow Y$,
$Z\subsetneq X$ are at least equally significant. Generally, the goal
is to find {\em minimal} (or most general) {\em interesting rules}, 
and prune out {\em redundant rules} \cite{bastide}. 

\subsection{General definition}
\label{redreddefinition}

Generally, redundancy can be defined as follows:

\begin{definition}[Minimal and redundant rules]
Given some interestingness measure $M$, rule $\ass{X}{Y}$ is a minimal 
rule, if there does not exist any rule $\ass{X'}{Y'}$ such 
that $X' \subseteq X$, $Y \subseteq Y'$ and 
$M(\ass{X'}{Y'})\geq M(\ass{X}{Y})$. If the rule is not minimal, then it 
is redundant. 
\end{definition}

Measure $M$ can be $t$-measure, $J$-measure, $\chi^2$-measure, or any
function which increases with the interestingness. In the
traditional frequency-confidence-framework with minimum frequency and 
confidence thresholds, $M$ can be defined as
$$M(\ass{X}{Y})=
\left\{
\begin{array}{l l}
1&\mbox{when } fr(\ass{X}{Y})\geq min_{fr} \mbox{ and } cf(\ass{X}{Y})
\geq min_{cf},\\
0&\mbox{otherwise.}\\
\end{array}
\right.$$

The motivation for the redundancy reduction is two-fold: First, a
smaller set of general rules is easier to interpret than a large set
of complex and often overlapping rules. Second, the problem complexity
is reduced, because it is enough to find a small subset of all
interesting rules. Thus, it is possible at least in principle to
perform the search more efficiently.

In the previous research, redundancy has been defined in various
ways. An important distinction is whether the redundancy refers to the
{\em interestingness} of a rule or the {\em representation} of
rules. In the first case, a rule is considered redundant, if there are
more general rules which are at least equally interesting. Such a
redundant rule contains no new information and it can be pruned
out. In the second case, even an interesting rule is considered
redundant, if it (or its frequency and confidence) can be derived from
the other rules in the representation. Now the rule is not pruned out,
but it is not represented explicitly. Examples of such {\em condensed
  representations} \cite{condensed} are closed \cite{pasquier99}, free
\cite{boulicaut00}, and non-derivable sets \cite{calders02}.

We will briefly analyze the effect of two common pruning techniques on
discovering statistically significant rules. The important question
is, whether a statistically significant rule can be pruned out as
``redundant'' causing type 2 error.

\subsection{Redundant rules}
\label{redrules}

According to a classical definition (e.g.\ \cite{aggarwalyu2}),
rule $X \rightarrow Y$ is redundant, if there exists $Z \subsetneq X$
such that $fr(X \rightarrow Y)=fr(Z \rightarrow Y)$. The aim of this
definition is to achieve a compact representation of all frequent and
strong association rules. The justification is sensible in the
traditional frequency-confidence-framework with fixed thresholds
$min_{fr}$ and $min_{cf}$: If rule $\ass{Z}{Y}$ is frequent and
strong enough, then all its specializations $\ass{X}{Y}$ with
$P(X,Y)=P(Z,Y)$ are also frequent and strong.

However, this definition is not adequate, if the goal is to find the
most significant rules. In fact, it causes always type 2 error
(rejects the most significant rules), unless $P(X)=P(Z)$. If
$P(X)<P(Z)$, then rule $\ass{X}{Y}$ has higher confidence and is
more significant than $\ass{Z}{Y}$:

\begin{theorem}
If $fr(X \rightarrow  Y)=fr(Z \rightarrow Y)$ for some $Z \subsetneq X$, then 
\begin{itemize}
\item[(i)]$cf(X \rightarrow Y)\geq cf(Z \rightarrow Y)$, and 
\item[(ii)]$cf(X \rightarrow Y)=cf(Z \rightarrow Y)$ only if $P(X)=P(Z)$.
\end{itemize}
\end{theorem}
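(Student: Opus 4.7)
The plan is to exploit the antimonotonicity of frequency together with the definition of confidence. Because $Z \subsetneq X$ means that every row satisfying the assignment $X$ also satisfies the assignment $Z$, we immediately have $P(X) \leq P(Z)$, with equality iff the two assignments hold on exactly the same rows.

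For part (i), I would start from $fr(X \rightarrow Y) = P(X,Y)$ and $cf(X \rightarrow Y) = P(X,Y)/P(X)$. The hypothesis gives $P(X,Y) = P(Z,Y)$, so
\begin{equation*}
cf(\ass{X}{Y}) = \frac{P(X,Y)}{P(X)} = \frac{P(Z,Y)}{P(X)} \geq \frac{P(Z,Y)}{P(Z)} = cf(\ass{Z}{Y}),
\end{equation*}
where the inequality uses $P(X) \leq P(Z)$ (and we may assume $P(X) > 0$, for otherwise the rule is vacuous).

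For part (ii), if $cf(\ass{X}{Y}) = cf(\ass{Z}{Y})$, then $P(X,Y)/P(X) = P(Z,Y)/P(Z)$. Substituting $P(X,Y) = P(Z,Y)$ and cancelling (assuming $P(X,Y) > 0$, otherwise the rule has zero confidence and the claim is again trivial) forces $P(X) = P(Z)$. The only potentially subtle point is the degenerate case $P(X,Y) = 0$, which I would dispose of in a sentence; otherwise the argument is a one-line division, and no real obstacle arises.
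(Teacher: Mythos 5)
Your proof is correct and follows essentially the same route as the paper: the paper writes $X=Z,Q$ and argues via $P(Q|Z)\leq 1$, which is exactly your observation that $P(X)\leq P(Z)$, and both arguments then reduce to the same one-line comparison of $P(Z,Y)/P(X)$ with $P(Z,Y)/P(Z)$. Your explicit handling of the degenerate case $P(X,Y)=0$ is in fact slightly more careful than the paper's own proof, which silently assumes $P(Z,Y)>0$ in step (ii).
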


\begin{proof} 
Let $\xy$ redundant, i.e.\  $\exists Z \subsetneq X$ such that $fr(\xy)=fr(Z
\rightarrow Y)$. Let $X=ZQ$, $P(X)=P(Z,Q)$ and  
$P(X,Y)=P(Z,Y,Q)$. According to the redundancy condition $P(Z,Y,Q)=P(Z,Y)$. 

Now $cf(\xy)-cf(Z \rightarrow Y)=\frac{P(Z,Y,Q)}{P(Z,Q)} - \frac{P(Z,Y)}{P(Z)}$
$=\frac{P(Z,Y)}{P(Z,Q)} - \frac{P(Z,Y)P(Q|Z)}{P(Z,Q)} \geq 0,$
because $P(Q|Z)\leq 1$. $P(Q|Z)=1$ iff $P(Z)=P(Z,Q)=P(X)$.
\hfill $\Box$ 
\end{proof}

Type 1 error (accepting spurious rules) is also likely, because the
least significant rules are output. So, in the worst case all
significant rules are pruned and only spurious rules are presented.

In the context of closed sets, the definition of redundancy is similar
(e.g.\ \cite{zaki}). However, now it is required that there exists
more general rule $\ass{X'}{Y'}$, $X' \subsetneq X$ and $Y' \subseteq
Y$, such that $P(X,Y)=P(X',Y')$ and $P(X)=P(X')$. This means that
$\ass{X}{Y}$ and $\ass{X'}{Y'}$ have the same frequency and
confidence. Still it is possible that
$\gamma(\ass{X}{Y})>\gamma(\ass{X'}{Y'})$ (i.e.\ $P(Y)<P(Y')$) and the
more significant rule is pruned.

\subsection{Productive rules}
\label{prodrules}

According to another common interpretation, rule $X \rightarrow Y$ is
considered redundant or uninteresting, if there exists more general
rule $\ass{Z}{Y}$, $Z \subsetneq X$, such that $P(Y|Z)\geq
P(Y|X)$. Following \cite{webbml} we call these rules {\em
  non-productive}. If $P(Y|X)>P(Y|Z)$ for all $Z \subsetneq X$, rule
$\ass{X}{Y}$ is {\em productive}. The aim of this definition is to
prune out rules which are less interesting than their generalizations.

The heuristic works correctly and avoids type 2 error. For
non-productive rule $\ass{X}{Y}$,
$\gamma(\ass{X}{Y})\leq\gamma(\ass{Z}{Y})$.  In addition, we know that
$P(X,Y)\leq P(Z,Y)$ and $\ass{X}{Y}$ cannot be more significant than
$\ass{Z}{Y}$. In practice, this means that $X$ contains some
attributes which are either independent from $Y$ or negatively
correlated with $Y$.

Generally, it is required that the improvement of
rule $\ass{X}{Y}$ is sufficient \cite{bayardogunopulos}:
\begin{equation}
\label{imp}
  imp(\ass{X}{Y})=cf(\ass{X}{Y})-\max_{Z\subsetneq X}\{cf(Z \rightarrow
  Y)\}\geq min_{imp}. 
\end{equation}
In practice, each rule is compared only to its immediate
generalizations ($|Z|=|X|-1$). If Equation (\ref{imp}) does not hold
for some $Z$, then rule $\ass{X}{Y}$ and all its specializations are pruned.
The problem is that now there could be $X' \supsetneq X$ such that 
$cf(X' \rightarrow Y)>cf(Z \rightarrow Y)$ and which is statistically 
more significant than $\ass{Z}{Y}$. This rule is not discovered,
because the whole branch was pruned. Thus, the pruning condition
should not be used to restrict the search space. 

Instead, the pruning condition can be used in the post-processing
phase, where a rule is compared to all its generalizations.  We will 
show that requirement $min_{imp}=0$ is a necessary but not
sufficient condition for the superiority of $\ass{X}{Y}$ over
$\ass{Z}{Y}$. This means that type 2 error does not occur, but type 1
error (accepting spurious rules) is possible.  However, when
$min_{imp}>0$, also type 2 error is possible, and non-redundant
significant rules can be missed.

The following theorem gives a necessary and sufficient condition for
the superiority of $\ass{X}{Y}$: 

\begin{theorem}
Let $X=Z,Q$ for some $Z,Q \subseteq R$. 
Rule $\ass{X}{Y}$ is more significant than $\ass{Z}{Y}$, if and only if

$$\frac{P(Y|X)-P(Y)}{P(Y|Z)-P(Y)}>\frac{\sqrt{1-P(X)P(Y)}}
{\sqrt{P(Q|Z)(1-P(Z)P(Y))}}.$$
\end{theorem}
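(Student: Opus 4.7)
The plan is to expand the inequality $t(\ass{X}{Y}) > t(\ass{Z}{Y})$ directly using Equation~(\ref{tXY}) and then reduce it algebraically to the stated form. First I would rewrite the $t$-measure in a form suited to the comparison: using $m(X,Y) = nP(X,Y)$ together with the identity $P(X,Y) - P(X)P(Y) = P(X)(P(Y|X) - P(Y))$, Equation~(\ref{tXY}) becomes
\[
t(\ass{X}{Y}) \;=\; \frac{\sqrt{n}\,\sqrt{P(X)}\,(P(Y|X) - P(Y))}{\sqrt{P(Y)\,(1 - P(X)P(Y))}},
\]
and the analogous expression holds for $\ass{Z}{Y}$ with $Z$ replacing $X$.

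The second step is to form $t(\ass{X}{Y}) > t(\ass{Z}{Y})$ and cancel the common positive factor $\sqrt{n/P(Y)}$, which leaves
\[
\frac{\sqrt{P(X)}\,(P(Y|X) - P(Y))}{\sqrt{1 - P(X)P(Y)}} \;>\; \frac{\sqrt{P(Z)}\,(P(Y|Z) - P(Y))}{\sqrt{1 - P(Z)P(Y)}}.
\]
Under the paper's standing assumption that we are interested in positive dependencies, $P(Y|Z) - P(Y) > 0$, so dividing preserves the inequality and isolates the confidence-gain ratio on the left. The final substitution exploits $X = Z,Q$: since $P(X) = P(Z)\,P(Q|Z)$, one has $\sqrt{P(Z)/P(X)} = 1/\sqrt{P(Q|Z)}$, and regrouping the square roots yields exactly the right-hand side claimed in the theorem. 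Because every step is a reversible equivalence, the ``if and only if'' follows automatically.

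The only real obstacle is sign bookkeeping: the equivalence as stated implicitly requires $P(Y|Z) > P(Y)$, for otherwise division by $P(Y|Z) - P(Y)$ reverses the inequality and the clean ratio form breaks down. Once this positivity assumption is made explicit (or, equivalently, one argues in the regime of productive rules introduced in Section~\ref{prodrules}), the remainder is routine algebra driven by the two identities $P(X,Y) - P(X)P(Y) = P(X)(P(Y|X) - P(Y))$ and $P(X) = P(Z)P(Q|Z)$.
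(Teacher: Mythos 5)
Your proposal is correct and follows essentially the same route as the paper's own proof: expand $t$ via the identity $P(X,Y)-P(X)P(Y)=P(X)(P(Y|X)-P(Y))$, cancel the common factors, substitute $P(X)=P(Z)P(Q|Z)$, and divide by $P(Y|Z)-P(Y)$. Your remark that the final division requires $P(Y|Z)>P(Y)$ is a legitimate point the paper leaves implicit (it works throughout in the regime of positive dependencies), but it does not change the argument.
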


\begin{proof} 
\begin{multline*} t(\ass{X}{Y})> t(\ass{Z}{Y}) \Leftrightarrow\\ 
\frac{\sqrt{n}P(X)(P(Y|X)-P(Y))}{\sqrt{P(X)P(Y)(1-P(X)P(Y))}}
>\frac{\sqrt{n}P(Z)(P(Y|Z)-P(Y))}{\sqrt{P(Z)P(Y)(1-P(Z)P(Y))}}\Leftrightarrow\\
\frac{P(Z)P(Q|Z)(P(Y|X)-P(Y))}{\sqrt{P(Z)P(Q|Z)P(Y)(1-P(X)P(Y))}}>
\frac{P(Z)(P(Y|Z)-P(Y))}{\sqrt{P(Z)P(Y)(1-P(Z)P(Y))}}\Leftrightarrow\\
\frac{\sqrt{P(Q|Z)}(P(Y|X)-P(Y))}{\sqrt{(1-P(X)P(Y))}}>
\frac{(P(Y|Z)-P(Y))}{\sqrt{(1-P(Z)P(Y))}}\Leftrightarrow\\
\frac{(P(Y|X)-P(Y))}{(P(Y|Z)-P(Y))}>\frac{\sqrt{(1-P(X)P(Y))}}
{\sqrt{P(Q|Z)(1-P(Z)P(Y))}}\\
\end{multline*}
\hfill $\Box$ 
\end{proof} 

Since $\frac{\sqrt{(1-P(X)P(Y))}}{\sqrt{P(Q|Z)(1-P(Z)P(Y))}}\geq 1,$
it follows that  

\begin{corollary}
If $t(\ass{X}{Y})> t(\ass{Z}{Y})$, then $P(Y|X)>P(Y|Z)$ and 
$imp(\ass{X}{Y})>0$. 
\end{corollary}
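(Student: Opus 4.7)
The plan is to invoke the preceding theorem directly and then bound the right-hand side of its characterization. From the theorem we have that $t(\ass{X}{Y})>t(\ass{Z}{Y})$ is equivalent to
\[
\frac{P(Y|X)-P(Y)}{P(Y|Z)-P(Y)}>\frac{\sqrt{1-P(X)P(Y)}}{\sqrt{P(Q|Z)(1-P(Z)P(Y))}},
\]
where $X=Z,Q$. So the goal reduces to showing that the right-hand ratio is at least $1$; once we know that, we can immediately conclude $P(Y|X)-P(Y)>P(Y|Z)-P(Y)$, and hence $P(Y|X)>P(Y|Z)$.

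The key step is a short algebraic verification that
\[
1-P(X)P(Y)\;\geq\;P(Q|Z)\bigl(1-P(Z)P(Y)\bigr).
\]
Here I would use the identity $P(X)=P(Z,Q)=P(Z)P(Q|Z)$, substitute into the left-hand side, and simplify: after cancelling the common term $P(Z)P(Q|Z)P(Y)$ on both sides, the inequality collapses to $1\geq P(Q|Z)$, which is trivial. Thus the radical ratio is $\geq 1$, and the strict inequality from the theorem gives $P(Y|X)>P(Y|Z)$. Since $Z$ was an arbitrary proper subset of $X$ with $t(\ass{Z}{Y})<t(\ass{X}{Y})$, and the theorem can be applied with any $Z\subsetneq X$ whose $t$-value is at most that of $\ass{X}{Y}$, the conclusion $P(Y|X)>\max_{Z\subsetneq X} P(Y|Z)$ follows, i.e.\ $\mathit{imp}(\ass{X}{Y})>0$.

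The main subtlety I expect is the sign of $P(Y|Z)-P(Y)$: the division in the theorem's characterization silently assumes this quantity is nonzero, and the comparison ``ratio $>$ something $\geq 1$ implies numerator $>$ denominator'' works cleanly only when $P(Y|Z)-P(Y)>0$. In the typical setting of interest, both $\ass{X}{Y}$ and $\ass{Z}{Y}$ represent positive dependencies so both differences are positive; I would state this condition explicitly (or note that when $P(Y|Z)\le P(Y)$ the conclusion $P(Y|X)>P(Y|Z)$ follows even more easily, since $t(\ass{X}{Y})>t(\ass{Z}{Y})\ge 0$ forces $P(Y|X)>P(Y)\ge P(Y|Z)$). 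The remaining step is purely a restatement: $\mathit{imp}(\ass{X}{Y})=cf(\ass{X}{Y})-\max_{Z\subsetneq X}cf(\ass{Z}{Y})=P(Y|X)-\max_{Z\subsetneq X}P(Y|Z)>0$.
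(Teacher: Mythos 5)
Your proof is correct and takes essentially the same route as the paper, whose entire argument for this corollary is the one-line observation that $\frac{\sqrt{1-P(X)P(Y)}}{\sqrt{P(Q|Z)(1-P(Z)P(Y))}}\geq 1$ (the inequality you verify explicitly via $P(X)=P(Z)P(Q|Z)$, reducing it to $1\geq P(Q|Z)$); the sign subtlety you flag about $P(Y|Z)-P(Y)$ is genuine but is tacitly assumed away by the paper, which throughout restricts attention to positively dependent rules. The only slip is in your parenthetical for the case $P(Y|Z)<P(Y)$: there $t(\ass{Z}{Y})<0$, not $\geq 0$, so $t(\ass{X}{Y})>t(\ass{Z}{Y})$ does not by itself force $P(Y|X)>P(Y)$ --- but this edge case lies outside the regime the theorem's derivation (which divides by $P(Y|Z)-P(Y)$) covers anyway.
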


Now we can give a better pruning condition than Equation (\ref{imp}): 

\begin{corollary}
If $$\frac{P(Y|X)-P(Y)}{P(Y|Z)-P(Y)}\leq \frac{1}{\sqrt{P(Q|Z)}},$$
then $t(\ass{X}{Y})<t(\ass{Z}{Y})$.
\end{corollary}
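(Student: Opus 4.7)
The plan is to obtain this corollary as an immediate consequence of the preceding theorem by showing that $\frac{1}{\sqrt{P(Q|Z)}}$ is a uniformly tighter (smaller) threshold than the theorem's exact threshold
$$B(X,Z,Y) \;:=\; \frac{\sqrt{1-P(X)P(Y)}}{\sqrt{P(Q|Z)(1-P(Z)P(Y))}}.$$
Once that is established, the corollary follows by contraposition of the theorem.

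First I would take the contrapositive of the theorem. The theorem asserts $t(\ass{X}{Y}) > t(\ass{Z}{Y})$ iff the ratio $\frac{P(Y|X)-P(Y)}{P(Y|Z)-P(Y)}$ exceeds $B(X,Z,Y)$; applying the biconditional with the roles of $X \to Y$ and $Z \to Y$ swapped yields $t(\ass{X}{Y}) < t(\ass{Z}{Y})$ whenever the ratio lies strictly below $B(X,Z,Y)$. So it suffices to show that the corollary's hypothesis puts the ratio strictly below $B$.

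Next I would compare the two thresholds. Cancelling the common $\sqrt{P(Q|Z)}$ in their denominators reduces the claim $\frac{1}{\sqrt{P(Q|Z)}} \leq B(X,Z,Y)$ to
$$1 \;\leq\; \sqrt{\frac{1-P(X)P(Y)}{1-P(Z)P(Y)}},$$
i.e.\ $P(X)\,P(Y) \leq P(Z)\,P(Y)$, i.e.\ $P(X)\leq P(Z)$. This last inequality is immediate from $Z\subsetneq X$ and the containment $\sigma_{X=\overline{x}}(r)\subseteq\sigma_{Z=\overline{z}}(r)$ (monotonicity of probability under set inclusion). Combining: the hypothesis forces the ratio at most $\frac{1}{\sqrt{P(Q|Z)}}\leq B(X,Z,Y)$, so by the theorem $t(\ass{X}{Y})\leq t(\ass{Z}{Y})$.

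The main subtle point, and really the only obstacle, is the mismatch between the corollary's non-strict hypothesis ($\leq$) and its strict conclusion ($<$). Generically $P(Q|Z)<1$, which gives $P(X)<P(Z)$ and therefore $\frac{1}{\sqrt{P(Q|Z)}} < B(X,Z,Y)$ strictly, so the ratio is strictly below $B$ and the strict conclusion follows. In the boundary case $P(Q|Z)=1$ the two thresholds coincide at $1$ and in fact $P(Y|X)=P(Y|Z)$, so $t(\ass{X}{Y})=t(\ass{Z}{Y})$; this case has to be noted explicitly (or the strict $<$ in the conclusion read as $\leq$), and it is the only place where care is needed in the write-up.
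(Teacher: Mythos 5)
Your proof is correct and follows the route the paper intends: the corollary is an immediate consequence of the preceding theorem together with the bound $\frac{1}{\sqrt{P(Q|Z)}} \leq \frac{\sqrt{1-P(X)P(Y)}}{\sqrt{P(Q|Z)(1-P(Z)P(Y))}}$, which (after cancelling $\sqrt{P(Q|Z)}$) reduces to $P(X)\leq P(Z)$ exactly as you argue. Your remark that the strict conclusion $t(\ass{X}{Y})<t(\ass{Z}{Y})$ degenerates to equality in the boundary case $P(Q|Z)=1$ correctly identifies a small imprecision in the paper's own statement rather than a flaw in your argument.
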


The condition can be expressed equivalently as

$$imp(\ass{X}{Y})\leq \frac{(P(Y|Z)-P(Y))(1-\sqrt{P(Q|Z)})}{P(Q|Z)}.$$

This pruning condition is more efficient than
$min_{imp}=0$, but still it does not prune out any non-redundant 
significant rules. Generally, the correct threshold $min_{imp}$
depends on the rules considered and $P(Q|Z)$, and no absolute
thresholds (other than $min_{imp}=0$) can be used.

\section{Conclusions}
\label{concl}
In this paper, we have formalized an important problem: how to find
statistically significant association rules. We have inspected the most
common interest measures and search techniques from the statistical
point of view. For all methods, we have analyzed, whether they can
cause type 1 error (accept spurious rules) or type 2 error (reject
significant rules) and the conditions under which the errors can occur.

The conclusions are the following: The traditional
frequency-confidence framework should be abandoned, because it can
cause both type 1 and type 2 errors. The simplest correction is to
adopt the so called frequency-dependence framework, where the degree
of dependence is used instead of confidence. If the minimum frequency
is set carefully (Theorem \ref{safeminfr}), no significant rules are
missed. On the other hand, all insignificant rules with the desired
level of significance can be pruned, using the $t$-measure.

The $\chi^2$-measure works correctly only if all significant partial
dependencies in the data are actually full dependencies. When it is
used to asses association rules, several spurious rules can be
accepted (type 1 error). Type 2 error does not occur, if the partial
dependencies are sufficiently significant, but the ranking order of 
association rules can be incorrect.

Pearson correlation coefficient $\phi$ is not recommendable for
assessing association rules. It can easily cause both type 1 and type 2
errors. $J$-measure can also cause both error types, although type 1 error
(accepting spurious rules) is more likely. Both $\phi$ and $J$ omit 
the data size, $n$, and it can be hard to decide proper cut-off values for 
significant dependencies. 

Finally, we analyzed two common redundancy reduction techniques, which
compare rule $\ass{X}{Y}$ to its generalizations $\ass{Z}{Y}$,
$Z\subsetneq X$. We showed that the minimum improvement condition,
$imp(\ass{X}{Y})=cf(\ass{X}{Y})-\max_{Z\subsetneq X}\{cf(Z \rightarrow
Y)\}\geq min_{imp}$, works correctly, if $min_{imp}=0$. However, it
cannot be used to restrict the search space, but only for
post-processing. If $min_{imp}>0$, significant rules can be missed. 
We gave also a more efficient pruning condition, which can be used to
prune redundant rules without type 2 error. 

The second redundancy condition, $fr(X \rightarrow Y)=fr(Z \rightarrow
Y)$, does just the opposite and prunes out the more significant,
specific rules. I.e.\ it causes always type 2 error, unless $P(X)=P(Z)$.

In the future research, these new insights should be utilized in the
search algorithms for discovering the statistically most significant
rules. The computational efficiency of such algorithms is a potential
bottle-neck, but the starting point looks promising: while the small
$min_{fr}$ increases the number of frequent sets, we can use 
$\gamma$-based pruning to restrict the search space, and the effects
may compensate each other.

\begin{acknowledgements}
I thank professor O. Kolehmainen for checking the validity of
statistical arguments and professor M. Nyk{\"a}nen for
his valuable comments.
\end{acknowledgements}

% BibTeX users please use one of
%\bibliographystyle{spbasic}      % basic style, author-year citations
\bibliographystyle{spmpsci}      % mathematics and physical sciences
%\bibliographystyle{spphys}       % APS-like style for physics
%\bibliography{}   % name your BibTeX data base
%\bibliography{../tkanta}

\end{document}